\newcommand{\figref}[1]{Figure~\ref{#1}}
\newcommand{\ie}{i.e.,\xspace}
\newcommand{\gain}[2]{\textsc{G}_{#1}\!\left(#2\right)}
\newcommand{\core}{{\normalfont\textsc{Core}}\xspace}
\newcommand{\correct}{{\normalfont\textsc{Correction}}\xspace}
\newcommand{\allocation}{\mathcal{A}}
\newcommand{\tallocation}{\mathcal{X}}
\newcommand{\residue}{R}
\newcommand{\mysetminusD}{\hbox{\tikz{\draw[line width=0.6pt,line cap=round] (3pt,0) -- (0,6pt);}}}
\newcommand{\mysetminusT}{\mysetminusD}
\newcommand{\mysetminusS}{\hbox{\tikz{\draw[line width=0.45pt,line cap=round] (2pt,0) -- (0,4pt);}}}
\newcommand{\mysetminusSS}{\hbox{\tikz{\draw[line width=0.4pt,line cap=round] (1.5pt,0) -- (0,3pt);}}}
\newcommand{\mysetminus}{\mathbin{\mathchoice{\mysetminusD}{\mysetminusT}{\mysetminusS}{\mysetminusSS}}}
\DeclareMathOperator*{\argmax}{arg\,max}
\theoremstyle{definition}
\newtheorem{definition}{Definition}
\newtheorem{remark}{Remark}
\newenvironment{proofof}[1]
{\medskip\noindent{\it #1.}\hspace{1ex}}
\newenvironment{rtheorem}[3][]{%
	\noindent\ifthenelse{\equal{#1}{}}{\bf #2 #3.}{\bf #2 #3 (#1)}%
	\begin{it}}{\end{it}}
\theoremstyle{plain}
\newtheorem{theorem}{Theorem}
\newtheorem{lemma}{Lemma}
\newtheorem{corollary}{Corollary}
\newcounter{algoline}
\newcommand\Numberline{\refstepcounter{algoline}\nlset{\thealgoline}}
\tikzset{main node/.style={circle,fill=blue!20,draw,minimum size=1cm,inner sep=0pt},}
\begin{document}
%\title{An Improved Protocol for Envy-Free Cake\\ Cutting with Four Agents}
\title{An Improved Envy-Free Cake Cutting Protocol for Four Agents}
\date{}
\author{Georgios Amanatidis\thanks{Centrum Wiskunde \& Informatica (CWI), Netherlands.} \and George Christodoulou\thanks{University of Liverpool, UK.} \and John Fearnley\footnotemark[2] \and Evangelos Markakis\thanks{Athens University of Economics and Business, Greece, Contact author.} \and Christos-Alexandros Psomas\thanks{Carnegie Mellon University, USA.} \and Eftychia Vakaliou\footnotemark[3]}	
\maketitle

\maketitle

\begin{abstract}
We consider the classic cake-cutting problem of producing envy-free allocations, restricted to the case of four agents. The problem asks for a partition of the cake to four agents, so that every agent finds her piece at least as valuable as every other agent's piece. The problem has had an interesting history so far. Although the case of three agents is solvable with less than 15 queries, for four agents no bounded procedure was known until the recent breakthroughs of 
Aziz and Mackenzie~\cite{AM16a,AM16b}. The main drawback of these new algorithms, however, is that they are quite complicated and with a very high query complexity. With four agents, the number of queries required is close to 600. In this work we provide an improved algorithm for four agents, which reduces the current complexity by a factor of 3.4. Our algorithm builds on the approach of 
%Aziz and Mackenzie 
\cite{AM16a} by incorporating new insights and simplifying several steps.  
Overall, this yields an easier to grasp procedure with lower complexity.
\end{abstract}

%%%%%%%%%%%%%%%%%%%%%%%%%%%%%%%%%%%%%%%%%%%%%%%%%%%%%%%%%%%%%%%%%%%%%%%%%%%%%%%%%%%%%%%%%%%%%%%%%
%\vspace{-3mm}
\section{Introduction}
%\vspace{-1mm}
%We study the classic cake-cutting problem in fair division: producing an envy-free allocation of an infinitely divisible resource. 
Producing an envy-free allocation of an infinitely divisible resource is a classic problem in fair division.
As it is customary in the literature, the resource is represented by the
interval $[0,1]$, and each agent has a probability measure encoding her preferences over subsets of $[0, 1]$.
The goal  is to divide the entire interval among the agents so that no one envies the subset received by another agent. We note that the partition does not need to consist of contiguous pieces; the piece of an agent may be any finite collection of subintervals. 

The problem has a long and intriguing history. It has been long known that envy-free allocations exist for any number of agents, via non-constructive proofs \cite{DB61,Stromquist80,su1999rental}. 
For algorithms, the standard approach is to assume access to the
valuation functions via \textit{evaluation} and \textit{cut queries} (see Section
\ref{sec:prelims}). Under this model, we are interested in counting the number
of queries needed for producing an envy-free allocation. For two agents, the
famous cut-and-choose protocol requires only two queries. For three agents, the
procedure of Selfridge and Conway \cite{BT96}  guarantees an envy-free
allocation after at most 14 queries. For four agents and onwards, however, the picture changes drastically. The first finite, yet unbounded%\footnote{The unboundedness comes from the fact that some steps depend on properties of the valuation functions.}
, algorithm was proposed by \cite{BT95}. This was followed up by other more intuitive algorithms, which are also unbounded, e.g., \cite{RW97,Pikhurko00}.
Finding a bounded algorithm was open for decades and positive results had been known only for certain special cases, like piece-wise uniform or polynomial valuations \citep{AY14,KLP13,Branzei15}.
It was only recently that a major breakthrough was achieved by Aziz and
Mackenzie, presenting the first bounded algorithms, initially for four agents
\citep{AM16a}, and later for an arbitrary number of agents~\citep{AM16b}.

Despite these significant advances, the algorithms of \cite{AM16a,AM16b} are still of very high complexity. For an arbitrary number of agents, $n$, the currently known upper bound involves a tower of exponents of $n$, and even for the case of four agents, the known algorithm requires close to $600$ queries. On top of that, these algorithms are rather complicated and their proof of correctness requires tedious case analysis in certain steps. Hence, a clean-cut and more intuitive algorithm is still missing. 
%\vspace{-8pt}
\medskip

\noindent {\em Contribution:} 
%\paragraph*{Contribution:} 
We focus on the case of four agents and present an improved algorithm that reduces the query complexity roughly by a factor of $3.4$ (requiring $61$ cut queries and $110$ evaluation queries). 
Our algorithm utilizes  building blocks that are similar to the ones used
by~\cite{AM16a}, but by incorporating new insights and simplifying several
steps, we obtain a solution with significantly fewer
queries. The main differences between our work and \cite{AM16a} are highlighted at the end of this section. %\ref{sec:algorithm}. %As with most other algorithms on the problem, 
Our algorithm works by maintaining a partial allocation along with a leftover residue. Throughout its execution, it keeps updating the allocation and reducing the residue, until certain structural properties are satisfied. These properties involve the notion of {\em domination}, where we say that an agent $i$ dominates another agent $j$, if allocating the whole remaining residue to $j$ will not create any envy for $i$. A crucial part of the algorithm is to get a partial allocation where one agent is dominated by two others.   
Once we establish this, we then exhibit how to produce a complete allocation of the cake without introducing any envy. Overall, this results in an algorithm with markedly lower query complexity.% which we also consider easier to grasp. %  (171 queries),.
\medskip
%\vspace{-12pt}

\noindent {\em Further related work:}
%\paragraph*{Further Related Work:} 
We refer the reader to the book chapters \cite{LR15,Procaccia16-survey} for a more proper treatment of the related literature. Towards simplifying the algorithm of Aziz and Mackenzie \cite{AM16a}, the work of Segal-Halevi et al.~\cite{HHA15} (see their Appendix B) proposes a conceptually simpler framework, % for describing the algorithm %of \cite{AM16a}, 
without, however,  improving the query complexity. Apart from the algorithmic results mentioned above, there has also been a line of work on lower bounds.
%For the easier concept of proportionality, there exist $\Omega(n\log{n})$ lower bounds for $n$ agents \cite{EP06-soda,WS07}, matching the best known algorithm \cite{EP84}. 
For envy-freeness, Stromquist \cite{Stromquist08} showed that there is no finite protocol for producing envy-free allocations where all the pieces are contiguous. 
Later on, Procaccia \cite{Procaccia09} established an $\Omega(n^2)$ lower bound
for producing non-contiguous envy-free allocations. Apparently, there is still a
huge gap between the known lower and  upper bounds for the problem for any $n\geq 4$. 
Interestingly, for the stricter notion of {\em strong envy-freeness}, where we require each agent to believe she is strictly better off than anyone else, the known lower bound is also $\Omega(n^2)$ \cite{BKM05}.
%The paper is structured as follows:
%In Section \ref{sec:prelims}, we define the main concepts and framework. 
%We proceed in Section \ref{sec:our algo} with our algorithm, which repeatedly makes use of two protocols: the $\core$ and the $\correct$ protocol.
%%In Section~\ref{sec:our algo} we
%We state our algorithm using the aforementioned protocols as black boxes and, assuming some key properties of them, we prove its correctness in Section \ref{sec:proof of correctness}. 
%%In the subsection~\ref{subsec:comparison} we compare our algorithm with the one of~\citep{AM16a}. 
%In Section~\ref{sec:core} we state $\core$  and prove its correctness. 
%In Section~\ref{sec:correct} we state $\correct$ and prove its correctness.

%\vspace{-8pt}
\subsubsection*{An Overview of the Algorithm}

%We warm up the reader with a high level description of the main ideas. 
We start with a high level description of the main ideas. 
As with most other algorithms, our algorithm maintains throughout its execution a partial allocation of the cake, along with an unallocated residue. 
The goal is to keep updating the allocation and diminishing the residue, with the invariant that the current partial allocation is always envy-free. Once the residue is eliminated, we are left with a complete envy-free allocation.
As mentioned earlier, the notion of {\em domination} is pivotal in our approach. The algorithm creates certain domination patterns between the agents, working in phases as follows:\medskip

\noindent {\em Phase One.}
We find  this first phase of particular importance, as it is also the most computationally demanding one. 
Here the goal is to get a partial envy-free allocation in which some agent is dominated by two other agents as in Figure \ref{fig:desired_graph}.
In order to establish dominations among agents, we use as a subroutine the so-called $\core$ protocol. 
%$\core$ is based on ideas from the core protocol of~\citep{AM16a}, but is much simpler in its description, and its analysis. 
In the $\core$ protocol one agent has the role of the ``cutter'', and the output is a new allocation with a strictly diminished residue. 
%Almost all the cut queries of our algorithm, as well as the majority of the evaluation queries, are done within the $\core$ protocol.
The properties of \core have several interesting and crucial consequences. First, if $\core$ is executed twice with the same agent as the cutter, then this cutter dominates at least one other agent in the resulting allocation. Moreover, if we run \core two more times, we may not get any extra dominations right away but we can still make a small {\em correction} so that the cutter dominates one more agent. This is done by using a protocol, referred to as the \correct protocol, which performs a careful redistribution.
Finally, by running \core one more time with a different cutter and the current residue, we show how further dominations arise that lead to the desired structure of one agent being dominated by two others. In total, phase one requires up to 6 calls to the \core protocol. \medskip
%, without introducing envy. 
%Phase one is completed by  running \core\, once again on the current residue. 

%\vspace{-17pt}
\noindent {\em Phase Two.} Suppose that at the end of phase one, agent $A$ is dominated by agents $B$ and $C$. 
The goal in the second phase is to produce a partial envy-free allocation where both $A$ and $D$ dominate both $B$ and $C$.
To achieve this goal, we execute $\core$ twice on the residue with $D$ as the cutter.
%i.e., with the agent not involved in the dominations of Phase one. 
Then, if we still do not have the required dominations, we use again the $\correct$ protocol to  appropriately reallocate one of the last two partial allocations produced by $\core$. This suffices to create the dominations shown in Figure \ref{fig:phase_two}. \medskip

\noindent {\em Phase Three.}
Since both $B$ and $C$ are now dominated by $A$ and $D$, we can simply execute the cut-and-choose protocol for $B$ and $C$ on the remaining residue. %\vspace{-10pt}
%This results in a complete allocation of the cake, which is envy-free, as all partial allocations produced thus far, had been envy-free.

\subsubsection*{Similarities and Differences with the Aziz-Mackenzie Algorithm in~\cite{AM16a}}
Our algorithm uses similar building blocks as the algorithm for four agents in \cite{AM16a}, combined with new insights. Namely, our $\core$  and  $\correct$ protocols  on a high level serve the same purpose as the core and the permutation protocols  in \citep{AM16a}. Conceptually, a crucial difference is the target structure of the domination graph. The initial (and most query-demanding) step of~\cite{AM16a} is to have {\it every} agent dominate two other agents. Here, our goal in phase one is to \emph{have just one agent dominated by two other agents}. Once this is accomplished, it is possible to reach a complete envy-free allocation much faster. Another important difference is the implementation of the $\core$ protocol itself. Our version is simpler regarding both its statement and its analysis. 
It also differs in the sense that it takes as input more information than in \cite{AM16a}, such as the current allocation, and it is not required to always output a partial envy-free allocation of the current residue. This extra flexibility allows us to avoid the tedious case analysis stated in the core protocol of \cite{AM16a} 
%We highlight below the main differences between these two works.
%\begin{enumerate}
%	\item At a conceptual level, a crucial difference is that we seek a different  target structure for the domination graph. The initial (and query-demanding) aim of~\citet{AM16a} is to have every agent dominate two other agents. Here, our goal is to have one agent dominated by two other agents in phase one. Once this step is completed, it is possible to reach a complete envy-free allocation faster. 
%	\item Another important difference is the implementation of the \core\, protocol. Our version is much simpler regarding both its statement and its analysis. We manage to avoid the tedious case analysis stated in \cite{AM16a}but we do it in a simpler way, avoiding case analyis. Also, another difference is that our core protocol needs to take into account the whole configuration.
%	\item Coming to more technical details, our \core\, protocol differs slightly in its specifications, as we require that it takes as input the whole current allocation, along with the remaining residue and the cutter.   
%\end{enumerate}
%The result of combining these aspects together is a simpler algorithm that reduces the number of queries of \cite{AM16a} by more than a factor of $3$. 
and, at the same time, further reduce the number of queries.

\hspace{8pt}
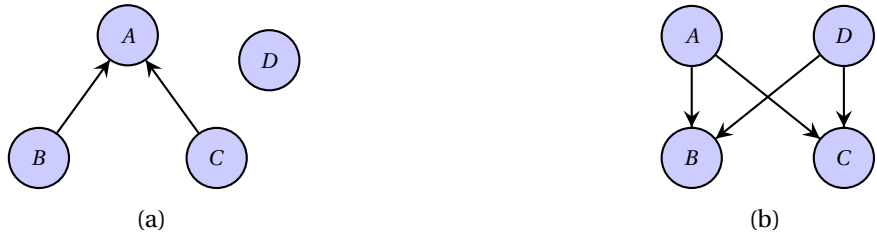
\begin{figure}[ht!]
	\centering
	\subcaptionbox{\label{fig:desired_graph}}[.49\linewidth]
	{
		\begin{tikzpicture}[thick,scale=0.8, every node/.style={scale=0.8}]
		\node[main node] (1) {$A$};
		\node[main node] (2) [below left = 1.05cm and 0.6cm of 1]  {$B$};
		\node[main node] (3) [below right = 1.05cm and 0.6cm of 1] {$C$};
		\node[main node] (4) [below right = -0.25cm and 1.3cm of 1] {$D$};
		
		\draw [decoration={markings,mark=at position 1 with {\arrow[scale=1.5,>=stealth]{>}}},postaction={decorate}] (2) to (1);
		\draw [decoration={markings,mark=at position 1 with {\arrow[scale=1.5,>=stealth]{>}}},postaction={decorate}] (3) to (1);
		\end{tikzpicture}
	}
	\subcaptionbox{\label{fig:phase_two}}[.49\linewidth]
	{
		\begin{tikzpicture}[thick,scale=0.8, every node/.style={scale=0.8}]
		\node[main node] (1) {$A$};
		\node[main node] (2) [right = 1.2cm of 1] {$D$};
		\node[main node] (3) [below = 0.8cm of 1] {$B$};
		\node[main node] (4) [below = 0.8cm of 2] {$C$};
		
		\draw [decoration={markings,mark=at position 1 with {\arrow[scale=1.5,>=stealth]{>}}},postaction={decorate}] (1) to (3);
		\draw [decoration={markings,mark=at position 1 with {\arrow[scale=1.5,>=stealth]{>}}},postaction={decorate}] (1) to (4);
		\draw [decoration={markings,mark=at position 1 with {\arrow[scale=1.5,>=stealth]{>}}},postaction={decorate}] (2) to (3);
		\draw [decoration={markings,mark=at position 1 with {\arrow[scale=1.5,>=stealth]{>}}},postaction={decorate}] (2) to (4);
		\end{tikzpicture}
	}
	\caption{An illustration of the domination graphs we want to achieve at the end of the first (a) and the second (b) phase respectively. In both graphs additional edges may be present but are not relevant.}\label{fig:domination}
\end{figure}%\vspace{-5pt}

%%%%%%%%%%%%%%%%%%%%%%%%%%%%%%%%%%%%%%%%%%%%%%%%%%%%%%%%%%%%%%%%%%%%%%%%%%%%%%%%%%%%
\section{Preliminaries}
\label{sec:prelims}
%\vspace{-1mm}
Let $N = \{ 1, 2, 3, 4 \}$ be a set of four agents. The cake is represented as the interval $\left[ 0, 1 \right]$; a \textit{piece} of the cake can be any finite union of disjoint intervals.  Each agent $i \in N$ is associated with a valuation function $v_i$ %which assigns a non-negative value to any given piece of cake, i.e., $v_i$ is 
defined on all finite unions of intervals. We assume that the valuation functions satisfy the following standard properties for all $i \in N$: 
%(1) Normalization: $v_i\left( \left[ 0, 1 \right] \right) = 1$. (2) Additivity: for all disjoint $X, X' \subseteq \left[ 0, 1 \right]$: $v_i \left( X \cup X' \right) = v_i \left( X \right) + v_i \left( X' \right)$. (3) Divisibility: for every $[x, y] \subseteq \left[ 0, 1 \right]$ and every $\lambda\in [0, 1]$, there exists $z \in [x, y]$ such that $v_i \left([x, z] \right) = \lambda v_i \left( [x, y] \right)$. Note that this implies that $v_i([x,x]) = 0$, for all $x\in [0, 1]$. (4) Nonnegativity: for every $X \subseteq \left[ 0, 1 \right]$ it holds that $v_i \left( X \right) \geq 0$. 
%%Divisibility implies that the $v_i$s are non-atomic, i.e., $v_i([x,x]) = 0$, for all $x\in [0, 1]$.
%%\noindent 
%By $\mathcal{X} = (X_1, X_2, X_3, X_4)$ we denote the allocation where  agent $i$ is given piece $X_i$. 
%%We note that each $X_i$ may in general be a finite union of intervals.
\vspace{-3pt}\begin{itemize}
	\item Normalization: $v_i\left( \left[ 0, 1 \right] \right) = 1$.
	\item Additivity: for all disjoint $X, X' \subseteq \left[ 0, 1 \right]$: $v_i \left( X \cup X' \right) = v_i \left( X \right) + v_i \left( X' \right)$.
	\item Divisibility: for every $[x, y] \subseteq \left[ 0, 1 \right]$ and every $\lambda\in [0, 1]$, there exists $z \in [x, y]$ such that $v_i \left([x, z] \right) = \lambda v_i \left( [x, y] \right)$. 
	Note that this implies that $v_i([x,x]) = 0$, for all $x\in [0, 1]$.
	\item Nonnegativity: for every $X \subseteq \left[ 0, 1 \right]$ it holds that $v_i \left( X \right) \geq 0$. 
\end{itemize}
%Divisibility implies that the $v_i$s are non-atomic, i.e., $v_i([x,x]) = 0$, for all $x\in [0, 1]$.
%\noindent 
\vspace{-3pt} By $\mathcal{X} = (X_1, X_2, X_3, X_4)$ we denote the allocation where  agent $i$ is given the piece $X_i$. 
%We note that each $X_i$ may in general be a finite union of intervals.

\begin{definition}[Envy-freeness]\label{def:envy-freeness}
	An allocation $\mathcal{X} = (X_1, X_2, X_3, X_4)$ is \emph{envy-free}, if $v_i \left( X_i \right) \geq v_i \left( X_j \right)$, for all  $i,j\in N$, i.e., every agent prefers her piece to any other agent's piece. 
\end{definition}

We say that $\mathcal{X}$ is a \textit{partial} allocation, if there is some  cake that has not  been allocated yet, i.e., $\bigcup_{i=1}^4 X_i\subsetneq [0, 1]$. The unallocated cake is called the  \textit{residue}. 
During the execution of the algorithm the residue  diminishes, until eventually it becomes the empty set.
As we noted, an important notion is that of \textit{domination} or \textit{irrevocable advantage} \cite{BT96}. It will be insightful to think of a graph-theoretic representation of our goals, via the {\em domination graph} of the current allocation.

\begin{definition}[Domination and Domination Graph]
	Given a partial allocation $\mathcal{X} = (X_1, X_2, \allowbreak X_3, \allowbreak X_4)$ and a residue $R$, we say that an agent $i$ \emph{dominates} another agent $j$, if $v_i(X_i)\geq v_i(X_j \cup R)$. That is, 
	$i$ would not be envious of $j$ even if $j$ were allocated all of $R$. %Formally: $$v_i(X_i)\geq v_i(X_j \cup R).$$
	The \emph{domination graph} with respect to $\mathcal{X}$ is a directed graph where the nodes correspond to the agents and there exists a directed edge $(i,j)$ if and only if agent $i$ dominates agent $j$.
\end{definition}

%
%\begin{definition}[Domination Graph]
%	Given a partial envy-free allocation $\mathcal{X}$, the \emph{domination graph} w.r.t. $\mathcal{X}$ is a directed graph where the nodes correspond to the agents and there exists a directed edge $(i,j)$ if and only if agent $i$ dominates agent $j$. 	
%\end{definition}

Achieving certain patterns in the domination graph can make the allocation of the remaining residue straightforward.
For example, if there exists a node $i$ with in-degree $3$, allocating all of the residue to agent $i$ results in an envy-free allocation. 
%The protocol of~\cite{AM16b} for $n$ agents achieves a domination graph where there are two sets $A$ and $B = N \mysetminus A$, 
%such that every agent in $A$ dominates every agent in $B$; then, one can recursively allocate the residue to the agents in $B$. 
As another example, the protocol of~\cite{AM16a} tries to get a domination graph where every node has out-degree at least $2$. %, i.e., each agent dominates two other agents.
In our algorithm, we also enforce a certain structure on the domination graph. %, as explained in the next section. 

%Finally, another important notion that is crucially exploited by the algorithm is the notion of \emph{gain} (related to the \textit{bonus} concept in \cite{AM16a}) that is introduced in the next section.
%\vspace{-8pt}

\subsubsection*{The Robertson-Webb Model}
The standard model in which we measure the complexity of cake cutting algorithms is the one suggested by Robertson and Webb \cite{RW98} and formalized by Woeginger and Sgall \cite{WS07}.
In this model, two kinds of queries are allowed:%\smallskip
\begin{itemize}%[leftmargin=*]
	\item \textit{Cut queries:} given an agent $i$, a point $x \in [0,1]$ and a value $r$, with $r\leq v_i \left( [x,1] \right)$,
	the query returns the smallest $y \in [0,1]$ such that $v_i \left( [x,y] \right) = r$.%, otherwise it returns $\perp$.%\smallskip
	\item \textit{Evaluation queries:} given an agent $i$ and an interval $[x,y]$, return $v_i \left( [x,y] \right)$. 
\end{itemize}%\smallskip
%We note that by the assumptions made, it does not make any difference if we use an open interval, i.e., 
%$v_i \left( [x,y] \right) = v_i \left( (x,y] \right) = v_i \left( [x,y) \right) = v_i \left( (x,y) \right)$.
Virtually all known discrete cake-cutting protocols can be analyzed within this framework.
For example, the cut-and-choose protocol is implemented as follows: the algorithm makes one cut query for agent 1 with $r=1/2$, starting from $x=0$. This is followed by an evaluation query on agent $2$ for one of the pieces (which also reveals the value of the second piece).%\vspace{-8pt}

\subsubsection*{Conventions on Ties, Marks, Partial Pieces, and Residues} 
All algorithms in this work ignore ties. However, assuming an appropriate tie-breaking scheme, this is without loss of generality (also see the discussion in \cite{AM16b}).

We follow some conventions---also adopted in related work---when it comes to handling trims and partial pieces.
In various steps during the algorithm, one agent cuts the residue into pieces, and the other agents are asked to place marks on certain pieces. We always assume that marks are placed starting from the left endpoint of a piece, and this operation creates a partial piece, contained between the mark and the right endpoint.
In particular, suppose we have a partition of the residue into four contiguous pieces. Then, an 
agent may be asked to place a mark on her most favorite piece so that the resulting partial piece
%, which corresponds to the interval between the mark and the right endpoint of the piece, 
has the same value as her second favorite piece (see \figref{fig:2-mark}).    
The types of marks that the algorithm needs are described in the following definition.

\vspace{5pt}
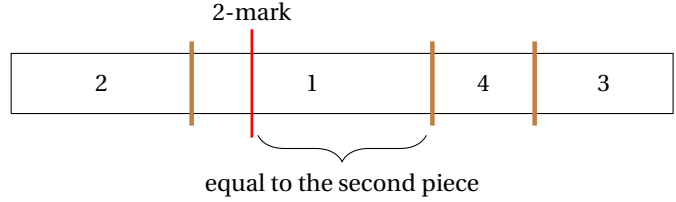
\begin{figure}[htb!]
	\centering
	\begin{tikzpicture}[scale=0.8, every node/.style={scale=0.9}]
	\draw (0,0) -- (11,0) -- (11,1.0) -- (0,1.0) -- (0,0);
	\draw[brown, line width=0.6mm] (3,1.2) -- (3,-0.2);
	\draw[brown, line width=0.6mm] (7,1.2) -- (7,-0.2);
	\draw[brown, line width=0.6mm] (8.7,1.2) -- (8.7,-0.2);
	
	\node[] at (4, 1.7)   (a) {$2$-mark};
	\draw[red, line width=0.35mm] (4,1.4) -- (4,-0.4);
	
	%\draw[decorate,decoration={brace,amplitude=3pt,mirror}]  (4,-0.2) -- (7,-0.2);
	%\node[] at (5.5, -0.5)   (a) {equal to the first piece};
	
	\draw[decorate,decoration={brace,amplitude=10pt,raise=3pt,mirror},yshift=2pt] (4.1,-0.3) -- (6.9,-0.3) node [midway,yshift=-24pt]{equal to the second piece};

	\node[] at (1.5, 0.5)   (a) {2};
	\node[] at (5, 0.5)   (a) {1};
	\node[] at (7.85, 0.5)   (a) {4};
	\node[] at (9.85, 0.5)   (a) {3};
	
	\end{tikzpicture}
	\caption{The view of the residue for a non-cutter at the time she performs a $2$-mark. }\label{fig:2-mark}
\end{figure}
%\vspace{-7mm}

\begin{definition}
	Given a partition  of the residue into four pieces, we say that an agent performs an $x$-mark, if she places a mark on each of her $x-1$ most valuable pieces so that the resulting partial pieces all  have the same value as her $x$-th favorite piece.
\end{definition}

In the description of the algorithm we  use $2$-marks and $3$-marks. Of course, after all marks are placed, each connected piece may have multiple marks on it. Whenever a connected piece $p$ is only partially allocated, the part $p'$ of $p$ that is allocated is always the interval between the second rightmost mark and the right endpoint of $p$. While at this point it is not clear whether a \textit{second} mark on a piece even exists, we will argue later on that marked pieces will have at least two marks (Lemma~\ref{lem:marked}). Hence, if some agent $i$ receives a partial piece $p'$, resulting from an initial piece $p$, it is not necessarily true that  $p'$ is defined by $i$'s own mark. However, in such a case the algorithm always makes sure that $i$ receives a part of $p$ that is beyond $i$'s own mark. Formally, we say that  $i$ is allocated a part of piece $p = [x,y]$ \textit{beyond}  (resp.~\textit{strictly beyond})  a mark $m$, % ($x \leq m \leq y$), 
if $i$ is allocated $[m',y]$ with $m' \le m$ (resp.~ $m' < m$).

Note that in the above discussion the residue is seen as a single interval, while in fact it may be a finite union of intervals. We keep this view throughout this work as it is conceptually easier and allows for a cleaner presentation. % of the different protocols.
Asking queries on pieces can, of course, be simulated by asking queries on
intervals, but the number of the latter can grow linearly on the number of
intervals that make up a piece.\footnote{This issue has not been addressed in
	the query counting of \cite{AM16a}, but there the main goal was to obtain a
	bounded algorithm. Here we do keep track of the extra queries.} We take care of this by making sure that at any time, the algorithm knows for every agent the values of all the intervals that make up the residue (see the query counting argument in the last part of Section \ref{sec:core}). 
%Then it is not hard to see that a query on a piece can be simulated by one query on one of its intervals which is determined after some local calculations. 

%%%%%%%%%%%%%%%%%%%%%%%%%%%%%%%%%%%%%%%%%%%%%%%%%%%%%%%%%%%%%%%%%%%%%%%%%%%%%%%%%%%%
%\vspace{-1mm}
\section{The Algorithm}\label{sec:algorithm}
%\vspace{-1mm}

%\subsection{A high level description}
The main result of our work is the following.

\begin{theorem}\label{thm:main_protocol}
	The \textsc{Main Protocol} returns an envy-free allocation and makes at most $61$ cut queries, and $110$ evaluation queries.
\end{theorem}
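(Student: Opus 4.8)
The plan is to establish the theorem by a careful accounting over the three phases of the \textsc{Main Protocol}, using the guarantees we will have proved for the \core and \correct subroutines. First I would fix the per-call costs: each invocation of \core should be shown to use some fixed number $c_{\mathrm{core}}$ of cut queries and $e_{\mathrm{core}}$ of evaluation queries (one cut for the cutter to divide the residue into four equal parts, evaluation queries for the three non-cutters to learn the four pieces, cut queries to place $2$-marks or $3$-marks, and the bookkeeping evaluation queries that keep every agent's valuation of every residue interval up to date, as promised in \secref{sec:core}); similarly fix the cost $c_{\mathrm{corr}}, e_{\mathrm{corr}}$ of one \correct call. Then phase one contributes at most $6$ \core calls plus the \correct calls used for the cutter's correction; phase two contributes $2$ \core calls plus possibly one \correct call; phase three is a single cut-and-choose between $B$ and $C$ on the residue, costing $1$ cut query and $1$ evaluation query. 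Summing these, together with any constant overhead for the final assignment of the residue once the target domination graph of \figref{fig:phase_two} is reached, gives the bound; I would present this as a table so the totals $61$ and $110$ are transparent.

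The correctness half — that the returned allocation is envy-free — follows by chaining the invariants. The key claim is that the partial allocation is envy-free after every phase, and the residue only shrinks. For phase one I would invoke the \core Properties to argue that running \core twice with a fixed cutter makes that cutter dominate someone, that two further \core calls plus one \correct step (which redistributes without creating envy) makes the cutter dominate a second agent, and that one more \core call with a different cutter produces the structure of \figref{fig:desired_graph}, i.e.\ some agent $A$ dominated by two others $B, C$. Phase two takes this allocation, runs \core twice with $D$ as cutter and possibly one \correct step, and I would argue — again from the \core/\correct guarantees and from the fact that $A$ is already dominated by $B$ and $C$ — that we reach the graph of \figref{fig:phase_two}, where $A$ and $D$ both dominate $B$ and $C$. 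In phase three, because $B$ and $C$ are dominated by everyone else, cut-and-choose between $B$ and $C$ on the residue leaves $A$ and $D$ envy-free by the domination property (allocating all the residue to either $B$ or $C$ does not hurt $A$ or $D$), and leaves $B$ and $C$ envy-free of each other by the standard cut-and-choose argument; $B$ and $C$ are also not envious of $A$ or $D$ since the pre-phase-three partial allocation was envy-free and $B, C$ receive additional cake. Hence the final complete allocation is envy-free.

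The main obstacle is the phase-one analysis: showing that six \core calls (interleaved with the \correct corrections) actually force the domination pattern of \figref{fig:desired_graph}, and pinning down exactly when the \correct step is needed versus when dominations already arise "for free." This is where the subtle case analysis lives, and it is driven entirely by the precise \core Properties and \correct Properties (the \texttt{cprop}/\texttt{ctprop} statements); the envy-freeness bookkeeping in phases two and three is comparatively routine. A secondary obstacle is the query count: one must be scrupulous about the extra evaluation queries incurred because the residue is a union of intervals rather than a single interval — each \core call must re-establish the "every agent knows the value of every residue interval" invariant, and these extra queries have to be folded into $e_{\mathrm{core}}$ so that the grand total does not exceed $110$. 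I would therefore structure the proof as: (i) state and use the \core and \correct guarantees as black boxes; (ii) prove the domination graph reaches \figref{fig:desired_graph} after phase one and \figref{fig:phase_two} after phase two; (iii) prove phase three completes an envy-free allocation; (iv) tabulate the queries.
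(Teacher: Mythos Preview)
Your correctness outline matches the paper's approach closely: treat \core and \correct as black boxes satisfying their stated properties, show phase one yields the domination pattern of Figure~\ref{fig:desired_graph}, phase two yields Figure~\ref{fig:phase_two}, and phase three finishes with cut-and-choose. The paper does exactly this, leaning on Lemmas~\ref{lem:core_domination}, \ref{lem:pigeon_lemma}, \ref{lem:dom_by_D} and \core Property~\ref{cproperty:domination} for the domination bookkeeping; your sketch identifies the right pressure points.

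The query-counting part, however, has a real gap. You propose to fix a single pair $(c_{\mathrm{core}}, e_{\mathrm{core}})$ per \core call and sum. With the worst-case bounds from Theorem~\ref{thm:core theorem} ($9$ cuts, $15$ evaluations) and eight \core calls plus one cut-and-choose, this yields $8\cdot 9 + 1 = 73$ cut queries and $8\cdot 15 + 1 = 121$ evaluation queries --- which is exactly the naive bound the paper mentions and then refines. A uniform per-call cost cannot reach $61$ and $110$.

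The missing observation is that not all \core calls are equally expensive. The three calls in lines~\ref{line:main:finish_phase1} and~\ref{line:main:core_phase2} of the \textsc{Main Protocol} have agents excluded from competition and are guaranteed to produce at most one partial piece; for these, only $5$ cut queries and $12$ evaluation queries are needed. There is a further saving of $2$ evaluation queries in the second execution of line~\ref{line:main:core_phase2}, since agents $A$ and $D$ never need the value of the marked piece up to the second rightmost mark. With these refinements the totals become $5\cdot 9 + 3\cdot 5 + 1 = 61$ and $5\cdot 15 + 3\cdot 12 + 1 - 2 = 110$. Your table will not balance unless you distinguish the cheap \core calls from the generic ones.
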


%For the sake of presentation, we describe a slightly worse algorithm, using up
%to 236 queries, which is still a significant improvement over \cite{AM16a}. In Appendix \ref{sec:optimization} we  optimize this version and obtain Theorem \ref{thm:main_theorem}.
%Before stating the pseudocode of our algorithm, 
We discuss first the main steps of our algorithm and  provide the relevant definitions and key properties, needed for the proof of correctness, in Section~\ref{sec:correctness}.

%In short, the protocol consists of three phases, and in each of the first two, the goal is to get a domination graph with certain structure. We find of particular importance the first phase, which computationally is also the most demanding one; there we obtain a partial envy-free allocation where some agent  is dominated by two other agents. This eventually leads to establishing further dominations in phase two. Finally, after enforcing the desired structure on the domination graph, the third phase is simply an execution of the cut-and-choose protocol between two specific agents suggested by the graph.
%\vspace{-3mm}
\paragraph*{Phase One.}
This is the most important part of the protocol, and computationally the most demanding one. The goal in phase one is to get a partial envy-free allocation, where some agent  is dominated by two other agents, i.e., the underlying domination graph has a node with in-degree at least $2$, as depicted in Figure~\ref{fig:desired_graph}.
In order to establish dominations among agents, we use a subroutine called  $\core$ protocol (stated in Section~\ref{sec:core}). 
This protocol takes as input a specified agent, called the {\em cutter}, the current partial allocation, and the current residue. %Sometimes we need $\core$ to tie-break in a consistent way, so in those cases we give as input an additional non-cutter.
For technical convenience, \core also takes as an  input a subset of  agents that we choose to \emph{exclude from competition} (this is made precise in the description of  \core  in Section~\ref{sec:core} but it roughly means that the excluded agents will choose their piece late in the \core protocol). In most cases, this argument is just the empty set. In particular, when no such argument is specified we mean that it is $\emptyset$. The output of $\core$ is a partial (\emph{usually} envy-free)
allocation of the residue  with some additional properties described below. 
%Essentially all the  queries of our algorithm are made within this protocol.
In the initial step of $\core$, the cutter divides the current residue into four
equal-valued pieces according to her own valuation function.  Throughout the
protocol the rest of the agents---the \textit{non-cutters}---may mark these
pieces, and at the end, agents may be allocated either partial (marked) or complete pieces. 
%We denote by $\allocation^k = \{ p^k_1, p^k_2, p^k_3, p^k_4 \}$ the allocation output by the $k$th execution of $\core$, and $\residue^k$ the residue after the $k$th execution.
Of course, if at any point $\core$ outputs an envy-free allocation of the whole cake, %(i.e., no partial pieces), 
the algorithm terminates. %; throughout our  (worst-case) analysis we implicitly
%assume that this never happens, and that the algorithm terminates after phase three.
The full description and the analysis of \core is given in Section~\ref{sec:core}. For now, we treat it as a black box 
and we assume that it satisfies the following properties. %\footnote{Hence, any future algorithm satisfying these properties with fewer queries,  will directly imply an improved algorithm for four agents.}. % \enote{Check if footnote is true.}

%\vspace{-2mm}
\begin{restatable}{cprop}{corepropone}\label{cproperty:whole-pieces}
	The cutter  and at least one more agent receive complete pieces, each worth exactly  $1/4$ of the value of the current residue 
	according to the cutter's valuation.
\end{restatable}

\vspace{-5pt}
\begin{restatable}{cprop}{coreproptwo}\label{cproperty:envy-free}
	The allocation output by any single execution of $\core$ when no agent is excluded from competition, is a (possibly partial) envy-free allocation.
\end{restatable}
%\vspace{-2mm}
%We should  note that when we exclude agents from competition in a run of \core, the resulting suballocation \emph{may not be envy-free}. However, 
%	we only exclude non-cutters that already dominate the other two non-cutters, thus ensuring that the overall allocation remains envy-free.

The above properties allow us to deduce an important fact: if $\core$ is executed at least twice with the same agent as the cutter, then this cutter dominates at least one agent in the resulting allocation. In fact, we can be more specific about the agent who gets dominated. The important observation here, stated in Lemma~\ref{lem:core_domination} (Section \ref{sec:correctness}), % and used repeatedly throughout the algorithm, 
is that a second run of $\core$ makes the cutter dominate whoever received the so-called {\em insignificant piece} in the first execution. 
\vspace{-1mm}
\begin{definition}
	Let $\allocation$ be an allocation produced by a single run of $\core$.  %, containing some complete and  some partial pieces. 
	Among the four pieces given to the agents, the partial piece that is least desirable to the cutter is called the {\em insignificant piece} of $\allocation$. 
\end{definition}

Hence, if we run \core twice, say with agent $1$ as the cutter, we enforce one edge in the domination graph. In order to proceed further and obtain a node with in-degree two, we first attempt, as an intermediate step, to have a domination graph where one node has out-degree equal to two. One may think that the intermediate step can be achieved by running \core more times  with agent $1$ as the cutter.
The  problem with this approach is that even if we further execute $\core$ \emph{any} number of times, there is no guarantee that new dominations will appear; the same agent may receive the insignificant piece in every iteration.
%Thus, when we reach line \ref{line:main:done with first pass} in the pseudocode of the main algorithm, we may not have managed to have a node with in-degree two. 
%More specifically, the worst case is that we obtain a domination graph where every node has in-degree and out-degree exactly one. This corresponds to the two possible configurations depicted in Figure~\ref{fig:cycles}.

To fix this issue, it suffices to run \core 4 times with agent $1$ as the cutter and then make a small \textit{correction} to one of the $4$ partial allocations produced by \core. 
In particular, denote by $\allocation^k = \{ p^k_1, p^k_2, p^k_3, p^k_4 \}$, with $k=1,...,4$, the suballocation output by the $k$th execution of $\core$ within the \textit{for} loop of line \ref{line:for_loop} of \textsc{Main Protocol}, and let $\residue^k$ be the residue after the $k$th execution. % (moreover, $R^0$ denotes the whole cake). 
Then clearly for each agent $i$, $p_i^k \subseteq R^{k-1}$, and the current allocation of the algorithm after the 4 calls to \core is $\tallocation = \{p_1, p_2, p_3, p_4\}$, with $p_i = \bigcup_{k=1}^4 p^k_i$. %, for each agent $i$.
Among these $4$ suballocations that $\tallocation$ consists of, we identify one in which we can perform a certain redistribution without introducing any envy. % to the agents. 
To do this, we exploit the notion of {\em gain}, which is the difference between the value that an agent has for her own piece compared to the pieces of agents she does not dominate. 
%$\gain{\allocation}{i}$ is non-negative for all $i$. 
%For Define $\gain{\allocation}{i}$ as follows:

\begin{definition}[Gain]\label{def:gain}
	Let $\tallocation = \{p_1, \dots, p_4\}$ be the current partial allocation of the cake, and  $\allocation = \{p'_1, \dots, p'_4\}$ be a suballocation of  $\tallocation$, \ie $p'_i\subseteq p_i$ for $i\in N$. Further, let $D_i$ be the set of agents that are dominated by $i$ in $\tallocation$ and $N_i=N \mysetminus (D_i \cup \{i\})$. Then 
	the gain of $i$ with respect to  $\allocation$,
	$\gain{\allocation}{i}$, is the difference between $v_i(p'_i)$ and the maximum value of $i$ for a piece  in $\allocation$ given to any agent in $N_i$, 
	i.e., $\gain{\allocation}{i} = v_i(p'_i) - \max_{j \in N_i} v_i(p'_j)$.\footnote{Note that $\gain{\allocation}{i}$ is not defined when $N_i=\emptyset$. In fact, we never need it in such a case.}  % \textcolor{red}{Otherwise, if $i$ was allocated her favorite piece completely, $\gain{\allocation}{i}=0$.} \ynote{Not sure what the red sentence means.}
\end{definition}

\vspace{5pt}
{ \LinesNotNumbered
	\begin{algorithm}[H]
		\DontPrintSemicolon 
		\NoCaptionOfAlgo
		{\small
			\textbf{Phase One} \; 
			\Numberline \For{$count = 1$ to $4$\label{line:for_loop}}
			{\Numberline Run $\core$ on the current residue with agent $1$ as the cutter.\label{line:for_loop_agent1}} 
			%Run $\core$ on the unallocated part of $[0, 1]$ twice with $i$ as the cutter.} 
			\Numberline  \If{the same agent got the insignificant piece in all 4 executions of \core\label{line:main:done with first pass}\label{line:main:agent1_dom2}}
			{ \Numberline Find $\allocation^* \in \{ \allocation^1, \allocation^2, \allocation^3, \allocation^{4}\}$ such that $\gain{\allocation^*}{i} \leq \sum_{ \allocation \neq \allocation^* } \gain{\allocation}{i}$ for all $i\in N \mysetminus \{1\}$. \label{line:main:find_allocation} \;
				\Numberline Run $\correct$ on $\allocation^*$.\label{line:main:correct1} \;}
			\Numberline Run $\core$ on the residue with agent $1$ as the cutter.\label{line:main:core_again}\;
			\Numberline \If{there is some agent $E\in N \mysetminus \{1\}$ not dominated by agent $1$\label{line:main:done_w_double_dom}}
			{ \Numberline Run $\core$ on the residue with agent $E$ as the cutter,  excluding agent $1$ from competition.\label{line:main:finish_phase1}\;}
			\Numberline\label{line:main:else:S-C}\Else{\Numberline Run the Selfridge-Conway Protocol on the  residue for agents $2$, $3$, and $4$, and terminate.\label{line:main:S-C}\;}
			\textit{Now, if the algorithm has not terminated, some agent $A$ is dominated by two other agents  $B$ and $C$. %In case there is an extra domination between $B$ and $C$, rename them---if necessary---so that $C$ dominates $B$. 
				Let $D$ be the remaining agent.} \;
			\textbf{Phase Two} \;
			\Numberline \For{$count = 1$ to $2$}
			{ %\Numberline \If{$C$ dominates $B$}{\Numberline Ask $D$ to cut the residue in four equal-valued pieces (for her). \;
				%	\Numberline	Let $A$, $B$, $C$, and $D$ choose their favorite piece in this order and terminate.}
				\Numberline Run $\core$ on the current residue with agent $D$ as the cutter,  excluding from competition any one from $\{B, C\}$ who dominates two non-cutters. \; \label{line:main:core_phase2}}
			\Numberline \If{$B$ and $C$ are not both dominated by $A$ and $D$\label{line:main:B xor C}}
			{ \Numberline Let $F\in\{B, C\}$ be the agent who got the insignificant piece in the last two calls of $\core$. \;
				\Numberline Run $\correct$ on the suballocation (out of the last two) where $\gain{\allocation}{F}$ was smaller.}
			\textit{At this point both $A$ and $D$  dominate both $B$ and $C$.}\;
			\textbf{Phase Three} \;	
			\Numberline Run \textsc{Cut and Choose}  on the current residue for agents $B$ and $C$. \;}
		\caption{\textsc{Main Protocol}$(N)$}\label{alg:MainAlgorithm} 
	\end{algorithm}
} \bigskip %\vspace{-10pt}

Using Definition \ref{def:gain}, we identify a suballocation among $\allocation^1, \allocation^2, \allocation^3, \allocation^4$, where the gain of each agent is small compared to her combined gain from the other three suballocations (line \ref{line:main:find_allocation} of the algorithm). The existence of such an allocation is shown in Lemma~\ref{lem:pigeon_lemma}. % via a pigeonhole principle argument.
Then, the redistribution is performed via the \correct protocol which takes as input an allocation $\allocation$, produced by $\core$, %(i.e., $\allocation$ is one of the $\allocation^i$s) 
and outputs an allocation $\allocation' = \pi(\allocation)$, where $\pi$ is a permutation on $N$. In doing so, special attention is paid to the insignificant piece of $\allocation$. 
For now, we treat $\correct$ as a black box and  ask that it satisfies the three properties below; see Section~\ref{sec:correct} for its description and analysis.

\begin{restatable}{ctprop}{correctpropone}\label{ctproperty:insignificant}
	The insignificant piece of $\allocation$ is given to a different agent in $\allocation'$. In particular, it is given to an agent that has marked it in $\allocation$.
\end{restatable}

\vspace{-5pt}
\begin{restatable}{ctprop}{correctproptwo}\label{ctproperty:high value}
	If a non-cutter was  allocated her favorite unmarked piece in $\allocation$, she will again be allocated a piece of the same value in $\allocation'$. 
\end{restatable}

Assume there is no agent dominating everyone else, meaning that $\gain{\allocation}{i}$ is  defined for all $i\in N$. For a partial envy-free allocation like $\allocation$, the gain of any agent is nonnegative. However, this may not be true for $\allocation'$, as it is not necessarily envy-free. What we need is for $(\mathcal X \mysetminus \mathcal A)\cup \mathcal A'$ to be envy-free, and towards this $\gain{\allocation'}{i}$ should not be too small for any  $i\in N$.

\begin{restatable}{ctprop}{correctpropthree}\label{ctproperty:gain bound} 
	$\gain{\allocation'}{i} \geq - \gain{\allocation}{i}$ for all agents $i$. %, i.e. the difference $v_i( p'_i ) - v_i ( p'_j )$ might be negative, but not smaller than $\gain{\allocation}{i}$
\end{restatable}

By Correction property \ref{ctproperty:insignificant}, the insignificant piece has changed hands after line \ref{line:main:correct1}. This allows us to make one extra call to \core in order to enforce one more domination (line \ref{line:main:core_again}).
Hence, the intermediate step is completed and we know that agent $1$ dominates at least 2 other agents. 
If she dominates all three of them, we can run any of the known procedures for 3 agents on the residue, and be done with only a few queries. The interesting remaining case is to assume that agent $1$ currently dominates exactly two other agents.

At this point there are various ways to proceed,  each with a different query complexity. E.g., we could repeat the whole process so far, but with agents $2$ and $3$ as cutters, and get at least 6 edges in the domination graph. This would ensure a node with in-degree two, but it requires several calls to \core. Instead, and quite remarkably, we show that it suffices to run \core only one more time, with the agent who is not dominated by agent $1$ as the cutter. As we prove in Lemma~\ref{lem:dom_by_D}, this makes the cutter dominate one of the agents that are dominated by agent $1$. Hence, phase one is now complete, as we have one agent with in-degree two.

%We will establish the following crucial property for \core.

%\begin{restatable}{cprop}{corepropthree}\label{lem:dom_by_D}
%	Suppose we have an allocation $\tallocation$ where ona agent, say $A$ dominates exactly two other agents, say $B$ and $C$. If $D$ is the remaining agent, and we run \core with $D$ as the cutter, then $D$ will dominate either $B$ or $C$ in the resulting allocation.
%\end{restatable}

%After this last call to \core, Phase one is now complete, since either $B$ or $C$, as referred to in \core Property \ref{lem:dom_by_D}, will have in-degree two. 
%and there exists an agent who is dominated by two other agents (see the first part of the proof of \thmref{thm:main_protocol}).

%\vspace{-3mm}

\begin{remark}
	The intermediate step of getting a node with out-degree two has also been utilized in \cite{AM16a}. The goal there however was to make every agent dominate two other agents, whereas we only needed this to hold  for one agent.	
\end{remark}
%\smallskip

\paragraph*{Phase Two.}
Suppose   phase two starts with a partial envy-free allocation where some agent, say $A$, is dominated by agents $B$ and $C$ (Figure~\ref{fig:desired_graph}). Our next goal is to produce a partial envy-free allocation where both $A$ and $D$ dominate both $B$ and $C$ (Figure~\ref{fig:phase_two}).
To achieve this goal, we execute $\core$ twice with $D$ as the cutter, i.e., with the agent not involved in the dominations of Phase one. Again, we need to argue about the behavior of \core under the existing dominations, and we ask for the following property.

%\vspace{-1mm}
\begin{restatable}{cprop}{corepropfour}\label{cproperty:domination}
	Assume we run \core with $D$ as the cutter, and suppose agent $A$ is dominated by the other two non-cutters, $B$ and $C$, neither of whom dominates the other. 
	Then, \emph{(1)} $A$ gets her favorite of the four complete pieces without making any marks, \emph{(2)} at least three complete pieces are allocated, and \emph{(3)} if a non-cutter, say $B$, gets a partial piece, then the remaining non-cutter, $C$, is indifferent between her piece and $B$'s piece.
\end{restatable}

%\vspace*{-20pt}
Using this property, we can show that after one call to $\core$ (1st execution of line \ref{line:main:core_phase2}), agents $A$ and $D$ will both dominate either $B$ {\em or} $C$.  
However, we need domination over both $B$ {\em and} $C$. The second call to \core (2nd execution of line \ref{line:main:core_phase2}) ensures that we can again resort to  the $\correct$ protocol. % and the gains of the non-cutters, in order to produce more dominations. 
If, after the two calls to $\core$, only one of $B$ and $C$, say $B$, is dominated by both $A$ and $D$, then running $\correct$ on one of the two core allocations from this phase---the one where the gain of $B$ is smaller---resolves the issue, and makes $A$ and $D$ dominate both $B$ and $C$.

%\vspace{-3mm}
\paragraph*{Phase Three.}
Since both agents $B$ and $C$ are dominated by $A$ and $D$, we just execute the  cut-and-choose protocol for $B$ and $C$, where $B$ cuts the  residue in two equal pieces and $C$ chooses her favorite piece. This completes our algorithm.
%\vspace{-5pt}

%\vspace{-1mm}
\subsection{Proof of Correctness of the Main Protocol}\label{sec:correctness}
%\vspace{-1mm}
In this section we analyze the correctness and the complexity of the main protocol. %In Appendix \ref{sec:optimization} we discuss how to further reduce the number of queries. 
%
%%\begin{restatable}{thm}{mainthm}\label{thm: main theorem}
%%\textsc{Main Protocol} returns an envy-free allocation and uses at most $100$ cut and $159$ evaluation queries.
%%\end{restatable}
%\begin{theorem}\label{thm:main_protocol}
%	The \textsc{Main Protocol} returns an envy-free allocation and uses at most $61$ cut queries, and $110$ evaluation queries.
%\end{theorem}
%
%
For our proof, we assume that  $\core$ and $\correct$ satisfy the properties mentioned earlier. 
%We state these protocols in Appendix~\ref{app:core+correct}. 
For now, we take as granted the following theorems, which are proved in Sections \ref{sec:core} and \ref{sec:correct} respectively.
%\vspace{-1mm}
\begin{restatable}{theorem}{corethm}
	%\begin{theorem}
	\label{thm:core theorem}
	The \core protocol in Section~\ref{sec:core} satisfies $\core$ properties~\ref{cproperty:whole-pieces},~\ref{cproperty:envy-free} and~\ref{cproperty:domination}, and makes at most $9$ cut queries and $15$ evaluation queries.
\end{restatable}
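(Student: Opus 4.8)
The plan is to defer the structural content to Section~\ref{sec:core}, where the \core protocol is explicitly described step by step, and to organize the proof as four separate verifications plus a query count. First I would recall the description of \core: the cutter partitions the current residue $R$ into four pieces $p_1,\dots,p_4$ of equal value $v_{\text{cutter}}(R)/4$; the non-cutters then place marks (2-marks and possibly 3-marks) on these pieces according to the protocol's branching; and finally pieces — possibly trimmed at marks — are handed out. \core Property~\ref{cproperty:whole-pieces} (the cutter and at least one more agent get a full quarter-valued piece) will follow essentially by inspection: the cutter, being last to have her allocation fixed among the ``competition'', always ends up with one of the four equal pieces uncut, and the branching of \core is designed so that in every branch at least one non-cutter is assigned an unmarked piece. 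I would point to the specific lines of the \core pseudocode that guarantee this and note that each of the four original pieces has value exactly $v_{\text{cutter}}(R)/4$ for the cutter by construction.

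For \core Property~\ref{cproperty:envy-free} (the output is a partial envy-free allocation when no one is excluded from competition), I would argue branch by branch. The key invariant maintained by the marking phase is that whenever an agent receives a trimmed piece, the trim is taken at a mark she herself placed — or at a mark of someone who values the residual part no more than she does — so that the piece she gets is worth at least as much to her as her designated ``second-favorite'' threshold, while every other agent's (possibly trimmed) piece is worth at most that threshold to her. Since the cutter sees all four pieces as equal and receives a full one, she envies no one. Here I expect to lean on Lemma~\ref{lem:marked} (every marked piece carries at least two marks, so the ``second rightmost mark'' rule is well defined) and on the convention that an agent always receives a part of a piece beyond her own mark.

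\core Property~\ref{cproperty:domination} is the part I expect to be the main obstacle, since it concerns the behavior of \core under a pre-existing domination pattern ($A$ dominated by the two other non-cutters $B,C$, with $D$ the cutter) and makes three fairly delicate claims. For (1), I would show that because $B$ and $C$ each already dominate $A$ with respect to the current allocation and residue, neither $B$ nor $C$ has any incentive — under \core's marking rules — to compete for the same piece $A$ wants, so $A$ can be assigned her favorite of the four complete pieces untrimmed; this requires checking that \core's branching, when fed this configuration, routes $A$ into the ``takes a whole favorite piece'' case. For (2), I would trace which branches can produce a trimmed piece at all and observe that at most one non-cutter ends up trimmed, leaving three whole pieces (the cutter's, $A$'s, and one more). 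For (3), if say $B$ gets the one partial piece, then by the trimming convention $B$'s piece is trimmed down to the value of $C$'s piece in the eyes of whoever placed the governing mark, and the protocol arranges that this mark is effectively $C$'s (or someone $C$ agrees with), forcing $C$ to be indifferent between her own piece and $B$'s. The care needed here is to handle all sub-cases of which agent marks which piece, using \correct Property~\ref{ctproperty:insignificant}-style reasoning about who has marked a given trimmed piece; I would keep this to the essential cases rather than an exhaustive enumeration. Finally, the query bound ($9$ cut queries, $15$ evaluation queries) I would obtain by summing, over the worst-case branch of the \core pseudocode, the cut queries (one per mark placed — at most the cutter's $3$ initial cuts plus the non-cutters' marks) and the evaluation queries (each agent learning the values of the residue intervals and of each piece), being careful, as flagged in Section~\ref{sec:prelims}, to account for the fact that the residue may be a union of several intervals but the algorithm maintains all agents' values for those intervals so no extra per-interval blowup occurs inside \core.
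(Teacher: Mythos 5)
Your overall plan mirrors the paper's proof: a structural lemma about marks (Lemma~\ref{lem:marked}: at most two pieces are marked and every marked piece carries at least two marks), from which \core Property~\ref{cproperty:whole-pieces} follows; an envy-freeness argument built on the facts that an agent who marks a piece and receives it gets it up to or beyond her own mark (Lemma~\ref{lem:I get what I mark}) and that no other agent receives a marked piece strictly beyond her mark (Lemma~\ref{lem:no one takes my shit pt2}); a direct branch-trace for \core Property~\ref{cproperty:domination}; and the same worst-case query tally ($3+6$ cuts, $9+6$ evaluations). The appeal to ``\correct Property~\ref{ctproperty:insignificant}-style reasoning'' inside the proof of a \core property is a category slip but harmless --- the paper's argument for part (3) only needs to identify who holds the rightmost and second-rightmost marks on the unique marked piece.

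There is, however, one step that would fail as you state it. For \core Property~\ref{cproperty:domination}(1) you argue that ``neither $B$ nor $C$ has any incentive --- under \core's marking rules --- to compete for the same piece $A$ wants, so $A$ can be assigned her favorite of the four complete pieces untrimmed.'' This is not the mechanism, and the claim is false on its face: $B$ and $C$ may perfectly well rank $A$'s favorite piece as their own favorite, and nothing in the marking rules deters them. The reason $A$ receives her favorite piece whole at line~\ref{line:core:favorite} is clause~(1) of Definition~\ref{def:competition}: an agent who is dominated by everyone else in $S$ is \emph{defined} to have no competition for any piece, irrespective of the other agents' preferences. Since $A$ is dominated by both $B$ and $C$, she is handed her favorite piece immediately and removed from $S$; only then do $B$ and $C$ resolve their own (possible) conflict over the remaining three pieces, which is what yields parts (2) and (3). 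If you instead try to derive (1) from the non-cutters' behavior, the property simply does not hold, so make sure your write-up grounds this step in the definition of competition rather than in any claimed reluctance of $B$ and $C$ to compete.
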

%\end{theorem}

\vspace{-5pt}
\begin{restatable}{theorem}{correctthm}
	%\begin{theorem}
	\label{thm:correct theorem}
	The \correct protocol in Section~\ref{sec:correct}  satisfies $\correct$ properties~\ref{ctproperty:insignificant}, \ref{ctproperty:high value} and~\ref{ctproperty:gain bound}, and makes no queries.
	%\end{theorem}
\end{restatable}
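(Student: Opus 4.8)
The plan is to verify the four claims of Theorem~\ref{thm:correct theorem} against the description of \correct, namely that it takes an allocation $\allocation=\{p'_1,\dots,p'_4\}$ produced by a single \core call and outputs $\allocation'=\pi(\allocation)$ for a permutation $\pi$ of $N$ that is designed around the insignificant piece. The claim that \correct makes no queries I would dispose of first, and directly: the permutation $\pi$ is determined entirely from data already gathered during the \core execution that produced $\allocation$—the record of which agent marked which piece and the piece values—and the redistribution only reassigns the existing (partial) pieces without placing new marks or evaluating new intervals. Hence no cut or evaluation query is issued. For the structural setup I would record the shape of $\allocation$ guaranteed by the \core properties: by \core Property~\ref{cproperty:whole-pieces} the cutter and at least one further agent hold complete pieces worth $1/4$ of the residue to the cutter, so at most two non-cutters hold partial pieces, and among these the insignificant piece is the cutter's least preferred by definition. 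By Lemma~\ref{lem:marked}, every partial piece carries at least two marks, so the insignificant piece was marked by at least one agent other than its current holder.

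This two-marks guarantee is exactly the lever for $\correct$ Property~\ref{ctproperty:insignificant}: I would read off from the definition of $\pi$ that it reassigns the insignificant piece to one of the agents that marked it, and that such an agent (which exists by Lemma~\ref{lem:marked}) is distinct from its current holder. For $\correct$ Property~\ref{ctproperty:high value} the idea is that $\pi$ is built so as not to harm an agent already satisfied by an unmarked complete piece: I would argue that any non-cutter holding her favorite unmarked piece is either fixed by $\pi$ or swapped only for a piece she values identically, and then check directly against the construction of $\pi$ that the only piece leaving such an agent is replaced by one she ranks equally, yielding the claim.

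The crux is $\correct$ Property~\ref{ctproperty:gain bound}. Writing $q_i$ for the piece agent $i$ receives in $\allocation'$, and noting that $N_i$ is determined by $\tallocation$ and hence unaffected by the reshuffle, the claim $\gain{\allocation'}{i}\geq-\gain{\allocation}{i}$ unfolds to
\[
v_i(q_i)+v_i(p'_i)\ \geq\ \max_{j\in N_i}v_i(q_j)+\max_{j\in N_i}v_i(p'_j).
\]
I would prove this by a case analysis driven by how $\pi$ moves pieces. For every agent whose own piece is unchanged and whose competitor set $N_i$ has its pieces permuted only among themselves, the two sides coincide, so $\gain{\allocation'}{i}=\gain{\allocation}{i}\geq 0\geq-\gain{\allocation}{i}$ with room to spare, where nonnegativity of the original gain uses that $\allocation$ is envy-free by \core Property~\ref{cproperty:envy-free}. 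The work therefore concentrates on the agents directly touched by the transfer of the insignificant piece—the one surrendering it and the one receiving it—together with any agent having one of these two in her competitor set $N_i$.

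For those affected agents I would exploit the semantics of marks. By the allocation convention an agent is always allocated a part of a piece \emph{beyond her own mark}, and a mark placed during a $2$- or $3$-mark equalizes the resulting partial piece with that agent's next-ranked piece; hence the recipient of the insignificant piece values it at least at her marking threshold, which lower-bounds the value she gains, while the agent who surrenders it is compensated by the piece she receives in the swap. Tracking these two bounds through the displayed inequality, and using \core Property~\ref{cproperty:envy-free} to control the original maxima $\max_{j\in N_i}v_i(p'_j)$, should establish the bound for the affected agents as well. The main obstacle I anticipate is precisely this case analysis: $\pi$ may move several pieces at once, the competitor sets $N_i$ differ across agents, and the value an agent attaches to the insignificant (partial) piece is pinned down only through the mark it carries, so getting the inequality tight requires matching, agent by agent, the value lost against the value gained. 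I would organize the argument by the number of partial pieces in $\allocation$ (one or two) and by whether the recipient of the insignificant piece is a cutter, a satisfied non-cutter, or another marker, checking the displayed bound in each configuration.
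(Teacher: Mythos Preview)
Your handling of the no-queries claim and of \correct Property~\ref{ctproperty:insignificant} matches the paper. For \correct Property~\ref{ctproperty:high value} you are slightly vague; the paper's argument is sharper: since $A$ and $B$ both marked the insignificant piece and $D$ is the cutter, the only candidate for a non-cutter who previously received her favourite unmarked piece is $C$, and the protocol's branching is then checked to ensure $C$ is the first non-cutter to choose among the remaining complete pieces.

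For \correct Property~\ref{ctproperty:gain bound} your plan diverges from the paper and, as written, leaves the hard step unresolved. Two points. First, you never isolate the clean upper bound on the new competitor term: since $\allocation$ is envy-free (\core Property~\ref{cproperty:envy-free}), $p'_i$ is $i$'s favourite piece in $\allocation$, and because $\{q_j\}$ is a permutation of $\{p'_j\}$ one immediately has $\max_{j\in N_i} v_i(q_j)\le v_i(p'_i)$. This single observation replaces your whole ``unaffected versus affected'' case split: it reduces $\gain{\allocation'}{i}\ge -\gain{\allocation}{i}$ to the lower bound $v_i(q_i)\ge \max_{j\in N_i} v_i(p'_j)$. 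Second, the paper proves that lower bound once, as a stand-alone lemma (Lemma~\ref{lem:secondfav}): every non-cutter who did not get her favourite whole piece in $\allocation$ receives in $\correct(\allocation)$ value at least that of her favourite piece among those formerly held by agents in $N\setminus(\{i\}\cup D_i)$. The case analysis lives inside that lemma (one versus two partial pieces; within the one-piece case, whether $C$ had been satisfied or had also $2$-marked), and it is driven by Lemma~\ref{lem:x unmarked} and the $2$-/$3$-mark semantics rather than by tracking competitor sets $N_i$. Your proposed organisation---by the role of the recipient of the insignificant piece---is partly off: the recipient is always $B$, a non-cutter who marked the piece, so ``recipient is the cutter'' never occurs. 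If you restructure your case analysis around proving $v_i(q_i)\ge v_i(s_i)$ for each non-cutter $i$ who marked, you will essentially be reproving Lemma~\ref{lem:secondfav}, after which Property~\ref{ctproperty:gain bound} is a two-line computation.
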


\noindent We start with one of the most important observations for our analysis: running $\core$ twice with the same  cutter creates an edge in the domination graph. Lemma~\ref{lem:core_domination}, as well as Lemma~\ref{lem:pigeon_lemma} below, has its counterpart in \cite{AM16a} concerning their core protocol. %For the sake of completeness, we include the proofs of our lemmata.

\begin{lemma}\label{lem:core_domination}
	Starting with an envy-free partial allocation, if $\core$ is executed two (not necessarily consecutive) times with the same agent, say $A$, as the cutter, 
	then in the resulting allocation $A$ dominates the agent who got the insignificant piece in the first execution.
	Moreover, if the insignificant piece was the only partial piece in the first execution, then the above domination is directly established at the end of the first execution.
\end{lemma}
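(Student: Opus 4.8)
The plan is to track, from the cutter $A$'s point of view, exactly how much value her own pieces accumulate versus how much the insignificant-piece recipient together with the residue can possibly be worth to her. Write $R_0$ for the residue at the start of the first execution of $\core$ with cutter $A$, and $Q_1,\dots,Q_4$ for the four contiguous pieces into which $A$ cuts it, each of $A$-value $\tfrac14 v_A(R_0)$. By \core Property~\ref{cproperty:whole-pieces}, $A$ receives a complete one of these (so $v_A(p_A^1)=\tfrac14 v_A(R_0)$) and at least one more agent receives a complete piece, so at most two of the four allocated pieces are partial. Let $q$ be the insignificant piece, carved from quarter $Q$, handed to agent $X$, and note that the residue $R_1$ after this execution is precisely the union of the ``left parts'' $Q''\setminus(\text{its partial piece})$ over the partially allocated quarters $Q''$.

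First I would dispose of the easy case where $q$ is the \emph{only} partial piece: then $R_1=Q\setminus q$, hence $v_A(q\cup R_1)=v_A(Q)=\tfrac14 v_A(R_0)=v_A(p_A^1)$, so $A$ already dominates $X$ at the end of the first execution --- this is the ``moreover'' clause --- and the domination clearly survives any later residue-shrinking step, in particular the second execution.

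The main case is when a second partial piece $q'$ is handed to some non-cutter $W\notin\{A,X\}$ from a quarter $Q'$. Here I would use two facts: (i) since $q$ is \emph{insignificant}, $v_A(q)\le v_A(q')$; and (ii) $v_A(R_1)=\tfrac12 v_A(R_0)-v_A(q)-v_A(q')$, because $Q\setminus q$ and $Q'\setminus q'$ are the only contributions to $R_1$. Now run $\core$ the second time with cutter $A$; since the intervening $\correct$ call (if any) makes no queries and does not touch the residue, this runs on $R_1$ again. Its four output pieces $p_A^2,p_B^2,p_C^2,p_X^2$ together with the new residue $R_2$ partition $R_1$, and $v_A(p_A^2)=\tfrac14 v_A(R_1)$ by \core Property~\ref{cproperty:whole-pieces}, so $v_A(p_X^2\cup R_2)\le v_A(R_1)-v_A(p_A^2)=\tfrac34 v_A(R_1)$. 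Plugging (ii) in and using (i), a one-line computation yields $v_A(p_A^1)+v_A(p_A^2)\ge v_A(q)+v_A(p_X^2\cup R_2)$. Adding to this the inequality $v_A(A\text{'s piece})\ge v_A(X\text{'s piece})$ coming from envy-freeness of the partial allocation in force just before the first execution, I obtain $v_A(A\text{'s final piece})\ge v_A(X\text{'s final piece}\cup R_2)$, i.e.\ $A$ dominates $X$.

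The one delicate point --- the step I expect to be the real obstacle --- is the ``not necessarily consecutive'' clause, i.e.\ a $\correct$ call applied to the first execution's allocation $\allocation^1$ in between the two runs. Such a call leaves $R_1$ (the quantity driving the whole estimate) unchanged, and by \correct Property~\ref{ctproperty:insignificant} it only moves $q$ to an agent who has marked it; since $v_A(q)\le v_A(q')$ is a statement about the \emph{piece} $q$, the computation above still applies to the new holder of $q$, and the envy-freeness of the overall post-correction allocation --- which is how $\correct$ is invoked in the \textsc{Main Protocol}, via \correct Property~\ref{ctproperty:gain bound} --- supplies the final inequality. Making this rerouting of $q$ airtight, and checking that it does not erode $A$'s own accumulated value, is where the care is needed; the numerical inequality itself is routine.
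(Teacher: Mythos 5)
Your computation for two \emph{consecutive} executions is correct and is essentially the paper's own argument in different clothing: the surplus $A$ banks in the first run, $v_A(p_A^1)-v_A(q)\ge \tfrac12 v_A(R_1)$ (the paper writes this as $v_A(r_1)$, the unallocated part of the insignificant quarter), covers the at most $\tfrac34 v_A(R_1)-\tfrac14 v_A(R_1)$ by which $X$ plus the final residue can catch up in the second run. Your treatment of the single-partial-piece case also matches the paper's.

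The gap is in the ``not necessarily consecutive'' clause, and you have misidentified what that clause is for. You take the only possible intervening event to be a $\correct$ call and conclude that the second run ``runs on $R_1$ again.'' But in the \textsc{Main Protocol} the lemma must be applied with intervening \emph{executions of $\core$} between the two runs: after line~\ref{line:main:find_allocation} the first execution is taken to be $\allocation^j$ for the $j$ supplied by Lemma~\ref{lem:pigeon_lemma}, which need not be $j=4$, and the second execution is the fifth call in line~\ref{line:main:core_again}, so executions $j+1,\dots,4$ sit in between. These strictly shrink the residue, so the second run operates on some $R'\subsetneq R_1$ (your identity $v_A(p_A^2)=\tfrac14 v_A(R_1)$ fails), and they also hand $X$ additional pieces, so the single envy-freeness inequality ``from the allocation in force just before the first execution'' no longer closes the ledger. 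The repair is exactly what the paper does: $v_A(R')\le v_A(R_1)$ because the residue only shrinks, so the second execution's deficit is at most $\tfrac12 v_A(R')\le\tfrac12 v_A(R_1)$ and is still covered by the first execution's surplus; and every intermediate sub-allocation is envy-free by $\core$ Property~\ref{cproperty:envy-free}, so $A$'s cumulative advantage over $X$ does not erode in between. With those two observations your inequality chain goes through; without them the proof does not cover the very case the clause exists for. (Your worry about $\correct$ rerouting $q$ is secondary: the paper's proof of the lemma does not engage with it, and the correctness proof of the \textsc{Main Protocol} simply reapplies the lemma to whoever currently holds the insignificant piece.)
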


\begin{proof}%{Proof of Lemma \ref{lem:core_domination}}	Let $I$ be the total remaining cake before the first $\core$ execution. 
	If only one piece $p$ is partially allocated in the first execution, let $p=p' \cup p_r$, where $p'$, $p_r$ are the allocated and unallocated parts of $p$ respectively. 
	It holds that $v_A(p')\leq 1/4 \cdot v_A(I)$, and, for all other pieces $q$ allocated in that execution, $v_A(q)=1/4 \cdot v_A(I)$.
	Therefore, $p'$ is the insignificant piece and is allocated to $B$. Moreover, for the total residue after the execution, $\residue$, we have that $\residue=p_r$.
	Combining the above, we get that $v_A(p' \cup \residue)=v_A(p' \cup p_r)=v_A(p)=1/4 \cdot v_A(I)$, and since the cutter was allocated value equal to $1/4 \cdot v_A(I)$, 
	she won't be envious of $B$ even if the latter where allocated the entire residue $\residue$. That is, $A$ dominates $B$.	
	
	If more than one piece was partially allocated, then by $\core$ property~\ref{cproperty:whole-pieces} exactly two pieces were partially allocated. 
	Therefore, the residue after the first execution is $\residue = r_1 \cup r_2$, where $r_1$, $r_2$ are the unallocated parts of the insignificant and the other partially allocated piece respectively.
	By the definition of $r_1, r_2$ we have $v_A(r_1)\geq v_A(r_2)$ and thus $v_A(R) \leq 2\cdot v_A(r_1)$. 
	Let $\residue'$ be the new residue after any possible in-between executions of \core. % (where $A$ was not the cutter).
	Clearly, we have $v_A(\residue') \leq v_A(\residue) \leq 2\cdot v_A(r_1)$. 
	Finally, let $\residue''$ be the residue after the second execution of $\core$ with $A$ as cutter. 
	Then $v_A(\residue'') \leq \frac{2}{4}\cdot v_A(\residue')$, since two out of the four equal pieces are allocated whole by $\core$  property~\ref{cproperty:whole-pieces}.
	Combining the two inequalities we get $v_A(\residue'') \leq v_A(r_1)$.
	However, in the first execution $A$'s %(whole) 
	piece was worth to her at least $v_A(r_1)$ more than $B$'s piece, and all the intermediate partial allocations between the two executions  are envy-free by $\core$ property~\ref{cproperty:envy-free}. This means that $A$ would not be envious even if the whole $\residue''$ was given to $B$.
	%	\qed
\end{proof}

Next, we need the existence of a suitable input for $\correct$ at line \ref{line:main:find_allocation} of the main protocol. %, within phase one. 
Recall that $\allocation^k = \{ p^k_1, p^k_2, p^k_3, p^k_4 \}$ is the allocation output by the $k$th execution of $\core$.
The following lemma is rather straightforward using a pigeonhole principle argument. %\footnote{An analogous statement is also utilized in \cite{AM16a}.}

\begin{lemma}\label{lem:pigeon_lemma}
	Suppose $\core$ is run $4$ consecutive times with agent $1$ as the cutter. %, where $\ell\ge 4$. 
	Then, there exists an allocation $\allocation^j \in \{ \allocation^1, \allocation^2, \allocation^3, \allocation^{4}\}$ such that for all agents $i\in N \mysetminus \{1\}$:
	$ \gain{\allocation^j}{i} \leq \sum_{\ell\in \{1, 2, 3, 4\}\mysetminus\{j\}} \gain{\allocation^{\ell}}{i}.$
	%\[ \gain{\allocation^j}{i} \leq \!\!\sum_{\substack{\ell\in \{1, 2, 3, 4\} \\ \ell \neq j}} \!\!\gain{\allocation^{\ell}}{i} \]
\end{lemma}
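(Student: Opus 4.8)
The plan is to use the pigeonhole principle on the four allocations $\allocation^1, \dots, \allocation^4$, treating each agent $i \in N \mysetminus \{1\}$ separately. Fix such an agent $i$. For each $\ell \in \{1,2,3,4\}$, consider the quantity $\gain{\allocation^\ell}{i}$. The key observation I would establish first is that each of these gains is \emph{nonnegative}: by $\core$ property~\ref{cproperty:envy-free}, every $\allocation^\ell$ is a (possibly partial) envy-free allocation, so $v_i(p^\ell_i) \geq v_i(p^\ell_j)$ for every $j$, and in particular $v_i(p^\ell_i) \geq \max_{j \in N_i} v_i(p^\ell_j)$, i.e.\ $\gain{\allocation^\ell}{i} \geq 0$. (Here I am implicitly assuming $N_i \neq \emptyset$, i.e.\ agent $1$ does not yet dominate everyone; if agent $1$ dominated all three others the main protocol would already have branched to a 3-agent procedure, so the gains in the statement are all well-defined — I would add a sentence noting this.)

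Now the heart of the argument: for a single agent $i$, let $j(i) = \argmin_{\ell} \gain{\allocation^\ell}{i}$ be the index of an allocation minimizing $i$'s gain. Then
\[
\gain{\allocation^{j(i)}}{i} \;\leq\; \frac{1}{4}\sum_{\ell=1}^{4}\gain{\allocation^{\ell}}{i} \;\leq\; \frac{1}{3}\sum_{\ell \neq j(i)}\gain{\allocation^{\ell}}{i} \;\leq\; \sum_{\ell \neq j(i)}\gain{\allocation^{\ell}}{i},
\]
where the last inequality uses that the right-hand sum is nonnegative. So for each individual agent, an index works; the only thing to check is that \emph{the same} index can be chosen for all three agents simultaneously. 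This does not follow from the above directly, so I would instead argue as follows: for each agent $i$, the set $S_i := \{\,\ell : \gain{\allocation^\ell}{i} \text{ is minimal among } \ell=1,\dots,4\,\}$ is nonempty, and in fact for any $\ell \in S_i$ the displayed chain shows $\gain{\allocation^\ell}{i} \le \sum_{\ell' \ne \ell}\gain{\allocation^{\ell'}}{i}$. Actually the cleanest route: note that the desired inequality $\gain{\allocation^j}{i} \le \sum_{\ell \ne j}\gain{\allocation^\ell}{i}$ holds for \emph{every} $j$ unless $\gain{\allocation^j}{i}$ strictly exceeds the sum of the other three; since all four gains are nonnegative and sum to $\sum_\ell \gain{\allocation^\ell}{i}$, the inequality can fail for at most \emph{one} index $j$ per agent (the one, if any, whose gain exceeds half the total). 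Thus each agent $i$ "forbids" at most one of the four indices. With three agents forbidding at most one index each, and four indices available, some index $j$ is forbidden by no agent — that is our $\allocation^j$.

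The main (minor) obstacle is precisely the quantifier order — getting a \emph{common} $j$ for all agents rather than a per-agent one — and the resolution is the counting argument in the last paragraph: "at most one bad index per agent, three agents, four indices, so a good index survives." I would present it in that form. The nonnegativity of gains (from $\core$ property~\ref{cproperty:envy-free}) is the only external input, and I would also explicitly flag the edge case where some $N_i$ could be empty, dismissing it via the branching structure of the \textsc{Main Protocol} so that Definition~\ref{def:gain} applies to all relevant agents.
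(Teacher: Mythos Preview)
Your proposal is correct and takes essentially the same approach as the paper: observe that each non-cutter's gain inequality can fail for at most one index (the paper phrases this as ``only at the argmax,'' you as ``only where the gain exceeds half the total''), and then apply pigeonhole with three agents and four indices. Your write-up is more explicit about the nonnegativity of gains (via $\core$ property~\ref{cproperty:envy-free}) and the $N_i\neq\emptyset$ edge case, which the paper leaves implicit, but the argument is the same.
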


\begin{proof}%{Proof of Lemma~\ref{lem:pigeon_lemma}}
	It is possible that $\gain{\allocation^j}{i} > \sum_{\ell \neq j} \gain{\allocation^{\ell}}{i}$ only if $\gain{\allocation^j}{i}$ is the maximum of $\gain{\allocation^1}{i}$, $ \dots, \gain{\allocation^{4}}{i}$.
	Let $M_i = \argmax_{j\in \{1,2,3,4\}} \gain{\allocation^j}{i}$, for $i \in \{2, 3, 4\}$. It suffices to show that for some $j \in \{1, \dots, 4\}$, $M_i \neq j$ for all $i \in \{2, 3, 4\}$. This, however, is straightforward since there are at least $4$ options for $M_i$ and $i$ only takes $3$ values. 
	%	\qed
\end{proof}
%Lemmata \ref{lem:core_domination} and \ref{lem:pigeon_lemma} above have their counterparts in \cite{AM16a} concerning their core protocol. For the sake of completeness, the proofs are given in Appendix~\ref{sec:app-correctness}.

%Finally, we calculate the number of queries needed for computing the gain of each agent in an allocation. The proof of the lemma is in Appendix \ref{sec:app-correctness}.
%
%\begin{lemma}\label{lem:gain_calculation}
%Let $\allocation$ be an allocation output by $\core$. Calculating $\gain{\allocation}{i}$ for all agents needs $4$ evaluation queries if $\allocation$ was produced  in phase one, and $1$ evaluation query if $\allocation$ was produced  in phase two.
%\end{lemma}

Finally, the next lemma guarantees that lines \ref{line:main:done_w_double_dom}-\ref{line:main:finish_phase1} do create a second domination over an agent already dominated by agent $1$, if such a domination is not already there, without destroying envy-freeness of the overall allocation.

\begin{lemma}\label{lem:dom_by_D}
	Suppose we have an envy free allocation $\tallocation$ where an agent, $A$, dominates  two other agents, $B$ and $C$. If  we run \core with  the remaining agent $D$ as the cutter, excluding $A$ from competition, then $D$ will dominate  $B$ or $C$ in the resulting envy-free allocation.
\end{lemma}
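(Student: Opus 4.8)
The plan is to analyze the allocation $\mathcal{Y}$ produced by running $\core$ with $D$ as the cutter on the current residue $R$, while excluding $A$ from competition. Since $A$ is excluded, the ``active'' non-cutters who genuinely compete for pieces are $B$ and $C$. First I would invoke $\core$ Property~\ref{cproperty:whole-pieces} to get that $D$ and at least one more agent receive complete pieces worth exactly $v_D(R)/4$ each, and I would want to argue that in fact we are in the setting of $\core$ Property~\ref{cproperty:domination} — or something close to it — since here too there is one non-cutter ($A$) who is ``passive''. I would need to check that the exclusion of $A$ from competition behaves, for the purposes of the structural conclusions, like the hypothesis of Property~\ref{cproperty:domination}; concretely, $A$ should end up with one of the four equal complete pieces and the only genuine partial piece, if any, involves $B$ or $C$. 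Combined with $\core$ Property~\ref{cproperty:envy-free} (adapted to the excluded-agent case, or argued directly from the structure of $\core$), the output allocation $\mathcal{Y}$ inserted into $\mathcal X \mysetminus \mathcal A$ should remain envy-free — this is the ``resulting envy-free allocation'' in the statement.

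**Establishing the new domination.**
The heart of the argument is to show $D$ dominates $B$ or $C$ \emph{after a single execution}, unlike Lemma~\ref{lem:core_domination} which needs two. The key leverage is that $A$ is dominated by both $B$ and $C$ in $\tallocation$, i.e., $v_B(X_B) \ge v_B(X_A \cup R)$ and $v_C(X_C)\ge v_C(X_A\cup R)$. When $D$ cuts the residue into four equal pieces and $A$ is excluded, I expect the following: either all four pieces are allocated whole — in which case $D$ gets value $v_D(R)/4$, the residue is empty, and $D$ trivially dominates everyone — or exactly one piece is split (by $\core$ Property~\ref{cproperty:whole-pieces}, at most two are partial, but with only $B$ and $C$ truly competing I would argue at most one genuine trim survives). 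In the latter case, let $p = p' \cup p_r$ be that piece, with $p'$ allocated and $p_r$ the new residue. The agent receiving $p'$ — call her the recipient among $\{B,C\}$ — got a partial piece, and by the analogue of $\core$ Property~\ref{cproperty:domination}(3) the other of $\{B,C\}$ is indifferent between her own (whole) piece and $p'$. The crucial point: $A$, being excluded, is forced to take a whole piece, so $A$ does \emph{not} receive $p'$. Then $p_r = R \mysetminus (\text{allocated part of } R)$, and since three whole quarters of $R$ plus $p'$ are allocated, $v_D(p_r) = v_D(R) - 3v_D(R)/4 - v_D(p') \le v_D(R)/4 = v_D(Y_D)$. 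Hence $D$'s own piece is worth at least $v_D(p_r)$, and since $D$ received a whole quarter while the recipient received only $p'\le v_D(R)/4$, we get $v_D(Y_D) \ge v_D(Y_{\text{recipient}} \cup p_r)$ after accounting for the fact that the recipient's piece together with the full remaining residue still fits inside $D$'s valuation of one-and-a-bit quarters — I would make this precise by noting the recipient's new total is $Y_{\text{recipient}} \cup p_r \subseteq$ (her old piece) $\cup$ (one quarter), which $D$ values at most $v_D(R)/4 + v_D(R)/4$; combined with the envy-freeness already present from $\tallocation$ this yields domination of the recipient.

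**The main obstacle.**
The step I expect to be the real difficulty is pinning down \emph{who} gets the single partial piece and ruling out the possibility that two genuine trims survive when $A$ is excluded — the excluded-from-competition mechanics of $\core$ are only sketched in this excerpt, so I would have to lean on the precise description in Section~\ref{sec:core} to confirm that excluding $A$ forces $A$ onto a whole piece and leaves at most one real trim among $\{B,C\}$. A secondary subtlety is that $\core$ Property~\ref{cproperty:domination} is stated for the case where the \emph{cutter's} three non-cutters include a dominated agent among competing ones; here $A$ is both dominated \emph{and} excluded, so I would need to verify the property's proof goes through (or adapt it) under this combined condition. Once those structural facts are in hand, the arithmetic bounding $v_D(p_r) \le v_D(R)/4$ and chaining it with the pre-existing envy-freeness is routine. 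I would also double-check the degenerate case where $\core$ outputs a complete envy-free allocation (residue becomes empty), in which every domination holds vacuously and the lemma is immediate.
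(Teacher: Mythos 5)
Your approach is essentially the paper's: unwind what $\core$ does when $A$ is excluded, observe that at most one genuinely partial piece results and that it is held by $B$ or $C$, and then get the domination from the fact that $D$, as cutter, values that piece (allocated part plus new residue) at exactly one quarter of the residue. However, the step you flag as ``the main obstacle'' and leave unverified is not a side issue --- it \emph{is} the entire content of the paper's proof, and you had the full pseudocode of $\core$ available to carry it out. With $A\in\mathcal{E}$ the competing set is $S=\{B,C\}$, so either $B$ and $C$ have different favourite pieces (no marks are placed, everything is allocated whole in the order $B,C,A,D$, the residue vanishes, and the claim is vacuous), or they share a favourite piece, in which case one checks against lines~\ref{line:core:mark condition1}--\ref{line:core:mark condition2} that both make $2$-marks on that single common piece; hence exactly one piece is marked, it goes (up to the second rightmost mark) to one of $B,C$, and $A$ and $D$ receive whole pieces. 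That is all that is needed; in particular you do not need to adapt $\core$ Property~\ref{cproperty:domination} (whose hypotheses, e.g.\ that neither of $B,C$ dominates the other, need not hold here), and your arithmetic is more roundabout than necessary: if $p=p'\cup p_r$ is the marked quarter then $v_D(p'\cup p_r)=v_D(R)/4=v_D(Y_D)$ exactly, which together with the prior envy-freeness of $\tallocation$ immediately gives the domination. Envy-freeness of the result also needs the explicit unwinding (in particular that $A$ picks before $D$ and that $B,C$ do not envy each other by the $2$-mark mechanics), which you gesture at but do not establish. So the idea is right and nothing in your plan would fail, but as written the proposal defers rather than proves the lemma.
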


\begin{proof}%{Proof of Lemma~\ref{lem:dom_by_D}}
	The exclusion of $A$ from competition greatly simplifies the execution of \core. It is easy to check that in this case \core is equivalent to the following algorithm:\vspace{-2pt}
	\begin{itemize}
		\item $D$ cuts the residue into 4 equal-valued pieces.
		\item If agents $B$ and $C$ have different favorite pieces, the agents choose a piece in the order $B, C, A, D$ and the algorithm terminates with a complete envy-free allocation. 
		\item Otherwise, $B$ and $C$ make a 2-mark on their common piece. Suppose $C$ has the rightmost mark on this piece (the other case is symmetric).
		\item $C$ gets the marked piece up to $B$'s mark.
		\item $B$ gets her second favorite piece.
		\item $A$ gets her favorite piece among the two remaining.
		\item $D$ gets the remaining piece.
	\end{itemize}
	Clearly $D$ now dominates $C$ since the initial allocation was  envy-free, $D$ is the cutter and $C$ got the unique partial piece. 
	It remains to show that the resulting allocation is still envy-free. First, $D$ as the cutter  gets a piece of equal value to her favorite piece.
	Moreover, it is straightforward  that $B$ and $C$ cannot envy each other, neither do they envy $A$ as they choose their pieces before her. Finally, $A$ was dominating both $B$ and $C$, hence she will continue to dominate them, and she chose her piece before $D$. 
	%	\qed
\end{proof}

Given all the above ingredients, we are ready to prove our main theorem.

\begin{proofof}{Proof of Theorem~\ref{thm:main_protocol}}
	%\textbf{Correctness.} 
	We first argue about the correctness of the \textsc{Main Protocol}. It suffices to prove that phases one and two terminate with the desired domination structures. 
	Then it is straightforward that the third phase results in an envy-free allocation. 
	\medskip
	
	\underline{\textit{Phase one:}} Lemma~\ref{lem:core_domination} guarantees that once the \textit{for} loop is completed in phase one, we have a domination graph where node $1$ has out-degree at least one. 
	
	We first  consider the case where at least 2 agents got an insignificant piece in lines \ref{line:for_loop}-\ref{line:for_loop_agent1}. Then Lemma~\ref{lem:core_domination} guarantees that after line \ref{line:main:core_again} agent $1$ dominates all those agents. If lines \ref{line:main:done_w_double_dom}-\ref{line:main:finish_phase1}
	are executed, Lemma~\ref{lem:dom_by_D} guarantees that some agent becomes dominated by both agents $1$ and $E$ and phase one is successfully completed. Otherwise, lines \ref{line:main:else:S-C}-\ref{line:main:S-C} are executed and the algorithm terminates returning a complete envy-free allocation. The latter holds because when agent $1$ dominates all other agents, we only need to divide the residue among them.
	What is left to be shown is that after line \ref{line:main:core_again} the overall (possibly partial) allocation is envy-free. Then, either the Selfridge-Conway protocol completes the allocation without introducing any envy, or Lemma~\ref{lem:dom_by_D}  guarantees that the overall allocation at the end of phase one is envy-free. 
	$\core$ property~\ref{cproperty:envy-free}, however, takes care of the envy-freeness after line \ref{line:main:core_again} because so far we have only executed \core with no excluded agents from competition (5 times). 

	The  remaining  case is when the same agent got the insignificant piece in all 4 iterations  in lines \ref{line:for_loop}-\ref{line:for_loop_agent1}, i.e., when the condition in  line \ref{line:main:agent1_dom2} % and \ref{line:main:done with first pass} 
	is true. 
	Among the 4 \core allocations produced in lines \ref{line:for_loop}-\ref{line:for_loop_agent1}, let $\allocation^j$ be an allocation satisfying the conditions of Lemma~\ref{lem:pigeon_lemma}. %, and recall that agent $c$ be the cutter in that particular execution of \core. 
	The execution of $\correct$ with input $\allocation^j$ gives an allocation $\pi(\allocation^j)$ 
	where the insignificant piece with respect to $\allocation^j$ is given to a different non-cutter, due to $\correct$ property~\ref{ctproperty:insignificant}. 
	%
	%Furthermore, by $\core$ property~\ref{cproperty:whole-pieces}, % and~\ref{cproperty:envy-free}, 
	%the cutter (i.e., agent $1$) has value equal to her favorite piece in $\allocation^j$, and therefore by $\correct$ property~\ref{ctproperty:high value} she has the same value in $\pi(\allocation^j)$. 
	%
	Since the insignificant piece has gone now to a different agent, Lemma~\ref{lem:core_domination} implies that running $\core$ one more time with agent $1$ as the cutter results in agent $1$ dominating a \emph{new} agent \emph{in addition to the old one}. %Finally, it is easy to see that the outgoing edges from all other nodes $i \in \{ 1, 2, 3, 4 \} \mysetminus \{ c \}$, that were created by $\core$ executions with $i$ as the cutter, remain unchanged. 
	Thus, with respect to the target domination graph, we argue about lines \ref{line:main:done_w_double_dom}-\ref{line:main:S-C} like above. That is, either we get the desired domination pattern in lines  \ref{line:main:done_w_double_dom}-\ref{line:main:finish_phase1} or the algorithm  terminates  with lines \ref{line:main:else:S-C}-\ref{line:main:S-C}. We still have to show that the allocation right after line \ref{line:main:correct1} is envy-free. However,
	by the choice of $\allocation^j$ and $\correct$ property~\ref{ctproperty:gain bound}, for $\tallocation =\left( \cup_{\ell\in \{1, 2, 3, 4\}\mysetminus\{j\}} \allocation^{\ell}\right) \cup \pi(\allocation^j)$  we have
	$\gain{\tallocation}{i}\ge 0$  for all agents $i$, i.e., $\tallocation$ is envy-free. $\core$ property~\ref{cproperty:envy-free} and Lemma~\ref{lem:dom_by_D} guarantee that the allocation at the end of phase one is envy-free as well. \medskip
	
	%
	%
	%Towards this, recall that we assumed that no agent is dominated by two other agents, and notice that if $k$ is the agent that got the insignificant piece in $\allocation^j$, then $k$ must be dominated only by agent $1$. Since \correct property \ref{ctproperty:high value} guarantees that agent $1$ receives the value of her favorite piece in $\allocation^j$, we have that \correct property \ref{ctproperty:gain bound} applies, i.e., $\gain{\allocation^j}{i} \geq - \gain{\allocation}{i}$ for all agents $i$.
	%Therefore, the overall allocation $\tallocation = (\cup_{k\neq j} \allocation^k) \cup \pi(\allocation^j)$ is envy-free, since  $\gain{\tallocation}{i}$ is non-negative for all agents $i$.
	%%\smallskip
	%%\ynote{Here (and in general I think) we assume that $\correct$ does not affect other dominations. This doesn't seem to be true, however. Property 3 (together with Lemma 1 or the choice of allocation in lines 11-12 of \textsc{Main}) maintains envy-freeness, but some dominations may be destroyed. Am I wrong, here? My feeling is that no dominations of interest are affected ever but still the whole issue is not clearly presented.}
	%
	%	
	
	\underline{\textit{Phase two:}} Having obtained a partial allocation where one agent, say $A$, is dominated by two others, say $B$ and $C$, we execute $\core$ twice on the current residue with the remaining agent $D$ as the cutter (line \ref{line:main:core_phase2} of the main protocol). First, suppose that there is at least one agent excluded from competition. Without loss of generality, we may assume $C$ is such an agent. Then it is very easy to check (see \core in the next section) that this execution of core is equivalent to having $D$ cut in four equal pieces, and agents pick in order $A$, $B$, $C$, $D$, thus resulting in a complete allocation of the residue. Taking into account the dominations of $B$ over $A$ as well as of $C$ over $B$ and $A$, we see that such an allocation is also envy-free.
	%%	Prior to each execution we check whether $C$ dominates $B$,\footnote{If there is a domination between $B$ and $C$, these two agents are renamed, if necessary, so that $C$ dominates $B$.} in which case a complete envy-free allocation of the residue is produced by having $D$ cut in four equal pieces, and agents pick in order $A$, $B$, $C$, $D$.
	%%	This results in a complete envy free allocation: from $D$'s perspective everyone gets $1/4$ of the residue, and each non-cutter dominates whoever chose a piece before her ($B$ dominates $A$, and $C$ dominates  $A$ and $B$).
	%	We may apply $\core$ property~\ref{cproperty:domination} 
	
	We conclude that, if line \ref{line:main:B xor C} is reached, there was no domination between $B$ and $C$ before each of the last two $\core$ executions. That is, these executions excluded no one from competition.  
	Therefore, we can combine $\core$ property~\ref{cproperty:domination} (which guarantees there is only one partial piece) with the fact that the  allocation so far is envy-free to
	guarantee that the first $\core$ execution resulted in $A$ and $D$ both
	dominating one of $B$ and $C$, specifically, the one who got the only partial piece (if there is no such piece, the algorithm terminates with a complete envy-free allocation). To see this, suppose $B$ is the agent who received the partial piece. By $\core$ property~\ref{cproperty:domination}, we know that $A$ gets her favorite piece, $D$ gets a complete piece (equivalent to her favorite piece), and the left-over residue is derived only from the partial piece of $B$. Hence, $A$ and $D$ cannot be envious even if the whole residue is given to $B$.

	A possible complication now is that the second $\core$ execution might result again in  $B$ getting the partial piece. 
	Let $\tallocation$ be the overall partial allocation so far, i.e., at line \ref{line:main:B xor C}.
	We execute $\correct$ on the \core allocation $\allocation$ (out of the two last ones) where $\gain{\allocation}{B}$ is smaller. 
	By $\core$ property~\ref{cproperty:domination}, we have that agent $A$ never marked a piece in $\allocation$. But neither did $D$, since she is the cutter.
	Thus, by $\correct$ property~\ref{ctproperty:insignificant}, the unique partial piece of $\allocation$ goes to  $C$ in $\allocation' = \pi(\allocation)$.
	At the same time $A$ and $D$ received the value of their favorite complete piece in both corresponding executions of \core, following $\correct$ properties \ref{ctproperty:high value} and \ref{ctproperty:gain bound} (note that $\gain{\allocation}{D}=0$, as $D$ is the cutter).
	Combining this with the fact that $B$ and $C$  each received the unique partial piece of one suballocation, we get that $A$ and $D$ now dominate both $B$ and $C$.
	
	Finally, we argue about the overall partial allocation being envy-free. Clearly, $A$ and $D$ are non-envious. By %$\correct$ property~\ref{ctproperty:high value} and 
	the choice of $\allocation$, we have $\gain{\tallocation'}{B} \geq 0$, where $\tallocation' = (\tallocation \mysetminus \allocation) \cup \pi(\allocation)$,
	% is the resulting allocation, 
	thus $B$ is not envious. On the other hand, $C$ is not envious since by $\core$ property~\ref{cproperty:domination} she was indifferent between her piece and $B$'s piece in the first place.
	\medskip
	%		
	%	A subtle detail is the assumption that there is only one agent, $A$, who is dominated by two other agents, $B$ and $C$, at the beginning of phase two. This is not necessarily true, since it might be the case that, for example, $B$ is also dominated by $A$ and $C$. Notice that $\core$ property~\ref{cproperty:domination} might be impossible to satisfy if one insists in all non-cutter agents with in-degree two to get their favorite piece. We go around this issue by specifying to $\core$ that it needs to breaks ties in favor of $A$ in both executions.
	
	\underline{\textit{Counting Queries:}} In the worst case, at most $8$ calls to \core are required followed by a call to \textsc{Cut and Choose}. Theorem \ref{thm:core theorem} directly gives an upper bound $8\cdot 9+1 =73$ cut  and $8\cdot15+1=121$ evaluation queries. For a more detailed argument matching the statement of the theorem, see the last part of the next section.
	\qed
\end{proofof}

%%%%%%%%%%%%%%%%%%%%%%%%%%%%%%%%%%%%%%%%%%%%%%%%%%%%%%%%%%%%%%%%%%%%%%%%%%%
\section{The $\core$ Protocol}\label{sec:core}

%Core 2.0 with agent $i$ as the cutter, first asks for agent $i$ to cut the cake in $4$ pieces that are equal according to her valuation function. Then, based on whether a few simple conditions are satisfied, the remaining agents are asked to either immediately take a piece, or place a mark on specific pieces. Once all the marks are in place, it is trivial to allocate the pieces in an envy-free way according to a ``rightmost rule''.

An important building block of  the whole algorithm is the \core protocol, used for allocating part of the current residue every time it is called. We begin with a high-level idea of how \core works.  It takes as input an agent, specified as the cutter, the current residue, and the current partial allocation. 
% Sometimes we need $\core$ to tie-break in a consistent way, so we also give as input an additional non-cutter. \ynote{We do not need this any more, right?}
$\core$ first asks the cutter to divide the residue into four equally valued contiguous pieces. % according to her own valuation function. 
%From the cutter's point of view, the part of the residue that is allocated at the end is worth at least half the total value of the input residue. Moreover, the value allocated to the cutter herself is exactly a quarter of that initial value. By exploiting this property, it is possible to  make the cutter dominate at least one agent by running the protocol twice.
%         
The cutter is going to be the last one to receive one of these four pieces. %, at the very end. 
Regarding the remaining three agents, 
each of them will either be immediately allocated her favorite piece or will be asked to place a mark on certain pieces, 
based on the relative rankings of the non-cutters for the pieces, and on possible domination relations that have already been established. 
Marks essentially provide limits on how to partially allocate pieces that are desired by many agents, so that they can be given without introducing envy.

As seen in the pseudocode description of \core, there are two possible types of marks that can be placed; $2$-marks and $3$-marks. 
The type of mark that the agents will be asked to place depends mainly on the conflicts that arise for the favorite and second favorite pieces of each agent. 
The conditions that determine whether an agent will be asked to place a $2$-mark or a $3$-mark are described in lines \ref{line:core:mark condition1} and \ref{line:core:mark condition2} of the core protocol. 
These conditions simplify significantly the numerous cases that arise in the core protocol of \cite{AM16a}.
To describe the protocol, we need to formalize conflicts between agents for certain pieces. %\medskip

\bigskip
\begin{algorithm}[H]
	\DontPrintSemicolon 
	\NoCaptionOfAlgo
	{\small
		Agent $k$ cuts the current residue $R$ in four equal-valued pieces (according to her).\label{line:core:cut4}\;
		Let $S = N\mysetminus(\{k\}\cup \mathcal E)$ be the set of agents who may compete for pieces.\label{line:core:competition}\;
		\If{there exists $j \in S$ who has no competition in $S$ for her favorite piece} %\footnotemark }
		{$j$ is allocated her favorite piece and is removed from $S$. %\ynote{Should this be a while instead of if?}
		}\label{line:core:favorite}
		\If{every agent in $S$ has a different favorite piece\label{line:core:exploit-order-start}}{Everyone gets her favorite piece and the algorithm terminates.\label{line:core:exploit-order-finish}} 
		\For{every agent $i \in S$ \label{line:core:mark loop}}{
			\If{(1) $i$ has no competition for her second favorite piece $p$, \emph{\textbf{or}} \label{line:core:mark condition1} \;
				(2) $i$ has exactly one competitor $j \in S$ for $p$, $j$ also considers $p$ as her second favorite, and $i$,$j$ each have exactly one competitor for their favorite piece \label{line:core:mark condition2}}
			{$i$ makes a $2$-mark.}
			\Else{$i$ makes a $3$-mark.} \label{line:core:exploit-order-end}
		}
		Allocate the pieces according to a rightmost rule:\;
		\If{an agent has the rightmost mark in two pieces\label{line:core:RM-in-two} } 
		{Out of the two partial pieces, considered until the second rightmost mark (which always exists by Lemma~\ref{lem:marked} below), she is allocated the one she prefers. \label{line:core:choose}\;
			The other partial piece is given to the agent who made the second rightmost mark on it.\label{line:core:RM1}}
		\Else{\label{line:core:else} Each partial piece is allocated---until the second rightmost mark---to the agent who made the rightmost mark on that piece.}\label{line:core:RM2}
		\If{any non-cutters were not given a piece yet}
		{Giving priority to any remaining agents in $S$ (but in an otherwise arbitrary order), they choose their favorite unallocated complete piece.}\label{line:core:RM3}
		The cutter is given the remaining unallocated complete piece. \label{line:core:cutter}
		\caption{$\core\,(k, R, \tallocation, \mathcal E)$} \label{fig:CoreAlgorithm} 
	}
\end{algorithm} \medskip  %\vspace{-10pt}

\begin{definition}
	\label{def:competition}
	During an execution of \core, let $P$ be a set of pieces and $S$ be a subset of non-cutters. We say that an agent $i\in S$ has {\em competition} for a piece $p\in P$, if \emph{(1)} $i$ is not dominated by everyone in $S$, and \emph{(2)} there exists $j\in S$ such that $p$ is  $j$'s favorite or second favorite piece in $P$. We call $j$ a {\em competitor} of $i$ for the piece  $p$.
\end{definition}

Definition \ref{def:competition} helps us identify whether we need to perform a $2$-mark or $3$-mark on the available pieces.
Furthermore, in some cases where we know that certain domination patterns appear, it is convenient to prevent some agents from competing for any piece. 
Hence, \core also takes as an  input a subset $\mathcal E$ of  agents that are  \emph{excluded from competition} in line \ref{line:core:competition}. 
In most cases, this argument is the emptyset, with the exception of lines \ref{line:main:finish_phase1} and \ref{line:main:core_phase2} of the \textsc{Main Protocol}.

%The main result of this subsection, which is also one of the most crucial properties for the entire algorithm to work, is Theorem \ref{thm:core theorem} below. 
%\vnote{use restate here for theorem 9}

The main result for \core, which is crucial for the entire algorithm to work, is the next theorem. \medskip

\begin{rtheorem}{Theorem}{\ref{thm:core theorem}}
	The \core protocol satisfies $\core$ properties~\ref{cproperty:whole-pieces},~\ref{cproperty:envy-free} and~\ref{cproperty:domination}, and makes at most $9$ cut queries and $15$ evaluation queries.
\end{rtheorem}\medskip

%\corethm*

%
%
%\footnotetext{Notice that if there are two agents satisfying this condition, then all non-cutters have a different favorite piece. Therefore, line~\ref{line:core:exploit-order-finish} will be executed and the algorithm will terminate.}

%\anote{the definition of competition requires us to know who dominates whom. Therefore $\core$ MUST take as input the whole allocation so far}

%\begin{definition}
%	\label{def:competition}
%	Given a set of pieces $P=\{ p_1, \dots, p_k \}$, we say that a non-cutter $i$ has a {\em competitor} $j$ for a piece $p\in P$, if (1) $i$ is not dominated by $j$, and (2) $p$ is the favorite or second favorite out of these $k$ pieces for $j$. If agent $i$ has at least one competitor for $p$, we say that $i$ has {\em competition} for  piece $p$.
%%	\ynote{Using the simpler definition ``$p$ is contested for the non-cutter $i$ if  $p$ is the favorite or second favorite for some other non-cutter $j$ who does not dominate $i$'' would not work here?} \anote{no, this would not be good enough. we want $i$ to get its favorite piece and leave when it is dominated by all other agents.}
%\end{definition}

The proof of Theorem \ref{thm:core theorem}  is based on a series of lemmas regarding the properties of \core.
We start by establishing the following key lemma. %\ynote{The proof could go back to the appendix, if you prefer. I mostly wanted to see the flow of the paper with everything in the main body.}
%, whose proof can be found in Appendix~\ref{app:proof_of_core_claim}.

%\begin{claim}\label{claim:marked}
%	Let $\{p_1, p_2, p_3, p_4 \}$, be the four pieces created by the cutter in the initial step of the protocol. After all markings have taken place,
%	\begin{enumerate}
%		\item At most two pieces have marks.
%		\item Each piece with a mark has at least two marks.
%	\end{enumerate}
%\end{claim}

\begin{lemma}\label{lem:marked}
	Let $\{p_1, p_2, p_3, p_4 \}$, be the four pieces created by the cutter in the initial step of \core. After all markings have taken place,
	at most two pieces have marks and each marked piece has at least two marks.
\end{lemma}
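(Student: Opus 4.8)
The plan is to analyze which pieces can receive marks by carefully tracking, for each non-cutter in $S$, how many marks she places and on which pieces. Recall that after the cutter cuts into four equal pieces $\{p_1,p_2,p_3,p_4\}$, lines \ref{line:core:favorite}--\ref{line:core:exploit-order-finish} may remove some agents from $S$ by handing them an uncontested favorite piece (which carries no mark), so the only agents who place marks are those still in $S$ when the \textbf{for} loop of line \ref{line:core:mark loop} begins; call this set $S'$. A key observation is that every agent in $S'$ has \emph{competition for her favorite piece} (otherwise she would have been removed in line \ref{line:core:favorite}), so among the agents of $S'$ there are at most two distinct favorite pieces in play, and in fact every favorite piece of an agent in $S'$ is also the favorite or second favorite of another agent in $S'$.

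First I would handle the two-mark case. An agent performing a $2$-mark places exactly one mark, on her favorite piece. An agent performing a $3$-mark places two marks, on her favorite and second favorite pieces. So I need to bound the union of favorite and second-favorite pieces over $S'$, and argue it has size at most two, \emph{and} that whenever a piece is marked it is marked at least twice. I would split on $|S'|$. If $|S'|\le 1$ the claim is essentially immediate (with $|S'|=1$ one checks that a lone agent with competition for her favorite cannot actually arise, or if it does she gets allocated a complete piece). The substantive cases are $|S'|=2$ and $|S'|=3$. For $|S'|=2$, say agents $i,j$: since each has competition for her favorite, and the only possible competitors are each other (and possibly agents already removed from $S$ — but removed agents took uncontested favorites, so their favorites are distinct from those of $i,j$), the competition for $i$'s favorite must come from $j$, meaning $i$'s favorite is $j$'s favorite or second favorite, and symmetrically. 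A short case check on the relative rankings then shows the marked pieces number at most two and each gets two marks; the condition in line \ref{line:core:mark condition2} is exactly engineered so that in the borderline configuration both agents $2$-mark the \emph{same} single piece (giving it two marks), while in all other configurations the agents are steered to $3$-mark a common pair of pieces.

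For $|S'|=3$, every agent has competition for her favorite, so the three favorites cannot be three distinct pieces — otherwise the configuration would have terminated at line \ref{line:core:exploit-order-finish}, or at worst no agent would have competition. Hence the three favorites occupy at most two pieces. I would then argue that the second-favorite pieces also fall within those same two pieces: if some agent's second favorite were a third piece $p$, then by the mark conditions she would place a $3$-mark on $p$, but I need $p$ to end up with two marks, which forces another agent to also mark $p$; one shows this cannot happen without violating the favorite-competition structure, so in fact the second favorites are confined to the same two pieces as the favorites. Along the way, for each of the two marked pieces I track that it receives at least two marks — this follows because a marked piece is the favorite of some agent in $S'$ and therefore, by the competition property, the favorite-or-second-favorite of a \emph{second} agent in $S'$, who also marks it (whether via a $2$-mark, if it is her favorite, or a $3$-mark, if it is her second favorite).

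\textbf{Main obstacle.} The delicate point is the interaction between the exact wording of the mark conditions in lines \ref{line:core:mark condition1}--\ref{line:core:mark condition2} and the guarantee that \emph{every} marked piece carries \emph{two} marks: it is not enough that few pieces are touched; I must ensure no piece is touched exactly once. Condition (2) of line \ref{line:core:mark condition2} is the narrow case where two agents share a second-favorite piece and each has a single competitor for her favorite — here I must verify that steering both agents to $2$-mark (rather than $3$-mark) still leaves every marked piece doubly marked, which requires checking that their two favorite pieces coincide with the pieces the other would otherwise have marked. Getting this bookkeeping exactly right, across the handful of ranking patterns compatible with "everyone in $S'$ has favorite-competition,'' is where the real work of the proof lies; the rest is routine casework.
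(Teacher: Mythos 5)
Your overall strategy---an exhaustive case analysis on the set $S'$ of agents who actually reach the marking loop, driven by the observation that each of them has competition for her favorite piece---has the same shape as the paper's proof, and you correctly identify that condition (2) of line \ref{line:core:mark condition2} is the mechanism that keeps a shared second-favorite piece unmarked. However, the route you propose for the main case $|S'|=3$ rests on a false intermediate claim: it is \emph{not} true that the second-favorite pieces of the agents in $S'$ are confined to the (at most two) pieces occurring as favorites, nor is it true that an agent whose second favorite is a ``third'' piece necessarily places a $3$-mark on it. Concretely, take the configuration where all three non-cutters favor $p_1$, agents $A$ and $B$ have $p_2$ as second favorite, and $C$ has $p_3$ as second favorite. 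Favorites occupy one piece and second favorites two further pieces, so their union spans three pieces; the lemma nevertheless holds because $C$ satisfies condition (1) of line \ref{line:core:mark condition1} (no competition for $p_3$) and therefore places only a $2$-mark on $p_1$, leaving $p_3$ untouched, while $A$ and $B$ $3$-mark $p_1$ and $p_2$ (condition (2) fails for them since each has \emph{two} competitors for $p_1$). Dually, in the configuration where $A,B$ favor $p_1$, $C$ favors $p_2$ which is $A$'s second favorite, and $B$ and $C$ share $p_3$ as second favorite, the piece $p_3$ is the common second favorite of \emph{two} agents and yet receives no marks at all, because condition (2) steers both $B$ and $C$ to $2$-marks. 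Your proposed contradiction (``one shows this cannot happen without violating the favorite-competition structure'') therefore has nothing to contradict: these configurations do occur, and the lemma holds in them for a different reason.

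The set that actually needs to be controlled is not (favorites) $\cup$ (second favorites) but (favorites of agents in $S'$) $\cup$ (second favorites of those agents who end up $3$-marking), and whether an agent $3$-marks depends on the conditions of lines \ref{line:core:mark condition1}--\ref{line:core:mark condition2}, which in turn depend on the full ranking profile. This is why the paper's proof simply enumerates the handful of ranking patterns compatible with the competition structure and verifies, pattern by pattern, who $2$-marks, who $3$-marks, and where the marks land, checking in each pattern that exactly the right pieces collect at least two marks. Your ``Main obstacle'' paragraph shows you see that the $2$-mark/$3$-mark dichotomy is where the work lies, but the scaffolding you build around it (confinement of second favorites to the favorite pieces) is false and would have to be discarded in favor of that direct enumeration; as written, the $|S'|=3$ case does not go through.
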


\begin{proof} %[Proof of Lemma \ref{lem:marked}]
	We prove the statement for the case where no agent is excluded from competition. Note that when someone is excluded from competition, the lemma is either straightforward (at least two agents excluded) or reduces to case (1.) below.
	
	To facilitate the proof we use a convenient matrix notation to describe instances. E.g., \vspace{-3pt}
	{\footnotesize 	\begin{center}
			\setlength{\tabcolsep}{0.5em} % for the horizontal padding
			{\renewcommand{\arraystretch}{1.2}% for the vertical padding	
				\begin{tabular}{|c|c|c|c|}
					\hline 1 & 2 &   &  \\ 
					\hline 1 & 3 & 2 & 4 \\  
					\hline 2 &   & 1 &  \\ 
					\hline
			\end{tabular}}
	\end{center}}%\vspace{-2pt}
	\noindent Each row corresponds to a non-cutter, call them $A, B, C$. Each column corresponds to a piece: $p_1$, $p_2$, $p_3$ and $p_4$. The number in cell $(i,j)$ indicates the rank of piece $j$ for agent $i$. A blank cell $(i,j)$ means ``any of the remaining options''. We say that two instances are \textit{isomorphic} if their matrix notation is the same, up to renaming of the agents and pieces. We consider four cases:
	\begin{enumerate}[leftmargin=*]
		\item Line~\ref{line:core:favorite} is executed and some agent $j$ is removed. If the remaining agents have a different favorite piece, then the lemma is vacuously true. If the remaining agents have the same favorite piece, then one of conditions~\ref{line:core:mark condition1} or~\ref{line:core:mark condition2} will hold, and thus they each make a $2$-mark on their favorite piece. %\footnote{In this case one of conditions~\ref{line:core:mark condition1} or~\ref{line:core:mark condition2} will always hold.}
		\medskip
		
		\item Line \ref{line:core:favorite} is not executed and all agents have a different favorite piece; the lemma is clearly true since there are no marked pieces in this case.
		\medskip
		
		\item Line \ref{line:core:favorite} is not executed and exactly two agents have the same favorite piece. The instance is isomorphic to: \vspace{-3pt}
		{\footnotesize 		\begin{center}
				\setlength{\tabcolsep}{0.5em} % for the horizontal padding
				{\renewcommand{\arraystretch}{1.2}% for the vertical padding
					\begin{tabular}{|c|c|c|c|}
						\hline 1 &   &   &  \\ 
						\hline 1 &   & {\ } & {\ } \\  
						\hline   & 1 &   &  \\
						\hline 
				\end{tabular}}
		\end{center}}%\vspace{-6pt}
		
		\begin{itemize}
			\item If agent $C$ has no competition for her favorite piece, she takes it and leaves. The remaining agents make exactly one 2-mark each, on the first piece.
			\smallskip
			
			\item If agent $C$ has competition with exactly one agent for her favorite piece: %\smallskip%
			\vspace{4pt}
			{\footnotesize 			\begin{center}
					\setlength{\tabcolsep}{0.5em} % for the horizontal padding
					{\renewcommand{\arraystretch}{1.2}% for the vertical padding
						\begin{tabular}{|c|c|c|c|}
							\hline 1 & 2  &   &  \\ 
							\hline 1 &   & 2  &  {\ }  \\  
							\hline   & 1 &   &  \\
							\hline 
					\end{tabular}}
			\end{center}} %\smallskip 
			\vspace{4pt}
			
			In this case, $A$ makes a 3-mark, marking $p_1$ and $p_2$, and $B$ makes a $2$-mark, marking $p_1$. Therefore, $p_1$ will definitely have at least $2$ marks. $C$ can make a $2$-mark or a $3$-mark, depending on which is her second favorite piece, but she will definitelly make a mark on $p_2$ (and thus $p_2$ will have exactly two marks on it). It remains to show that no other piece is marked. If $C$'s second favorite piece is $p_3$ or $p_4$, she makes a $2$-mark. Otherwise, she makes a $3$-mark, but her second mark is on $p_1$.
			\smallskip
			\item If agent $C$ has competition with exactly two agents for her favorite piece: %\medskip %
			\vspace{4pt}
			{\footnotesize 			\begin{center}
					\setlength{\tabcolsep}{0.5em} % for the horizontal padding
					{\renewcommand{\arraystretch}{1.2}% for the vertical padding
						\begin{tabular}{|c|c|c|c|}
							\hline 1 & 2  &   &  \\ 
							\hline 1 & 2  &  {\ }   &  {\ }  \\  
							\hline   & 1 &   &  \\
							\hline 
					\end{tabular} }
			\end{center}} \medskip %\vspace{-4pt}
			
			Agents $A$ and $B$ make $3$-marks, on $p_1$ and $p_2$, and thus at least two pieces are marked with $2$ marks each. It remains to show that no other piece is marked. If $C$'s second favorite piece is $p_3$ or $p_4$ she makes a $2$-mark on $p_2$. Otherwise, she makes a $3$-mark on $p_1$ and $p_2$.			
		\end{itemize}
		%\vspace{-5pt}
		\item Line \ref{line:core:favorite} is not executed and all agents have the same favorite piece. 
		If they also have the same second favorite piece, then they all make $3$-marks on the same two pieces. 
		If they have different second favorite pieces, they all make $2$-marks on $p_1$. Otherwise, the instance is isomorphic to:  %   
		\vspace{2pt}
		{\footnotesize 		\begin{center}
				\setlength{\tabcolsep}{0.5em} % for the horizontal padding
				{\renewcommand{\arraystretch}{1.2}% for the vertical padding
					\begin{tabular}{|c|c|c|c|}
						\hline 1 & 2  &   &  \\ 
						\hline 1 & 2  &   &  {\ }  \\  
						\hline 1 &    & 2 &  \\
						\hline 
				\end{tabular}}
		\end{center}}\vspace{5pt}
		In this case, $A$ and $B$ make a $3$-mark on $p_1$ and $p_2$, while $C$ makes a $2$-mark on $p_1$: two pieces marked with at least $2$ marks each. \qedhere
	\end{enumerate}
	%\qed
\end{proof}

An almost immediate corollary is the following:
\begin{corollary}\label{cor:unmarked pieces}
	All pieces allocated in lines~\ref{line:core:RM3} and \ref{line:core:cutter} of the \core protocol are unmarked, and therefore they are allocated as complete pieces.
\end{corollary}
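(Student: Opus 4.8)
The plan is to argue that every \emph{marked} piece is handed out during the rightmost-rule step of \core (lines~\ref{line:core:RM-in-two}--\ref{line:core:RM2}), so that by the time lines~\ref{line:core:RM3} and~\ref{line:core:cutter} are reached, the only pieces still unallocated are unmarked, hence complete. First I would invoke Lemma~\ref{lem:marked}: after all markings have taken place, at most two of the four pieces $p_1,\dots,p_4$ carry marks, and each marked piece carries at least two marks. In particular the ``second rightmost mark'' referred to in lines~\ref{line:core:choose} and~\ref{line:core:RM2} is well defined on every marked piece, so the rightmost rule makes sense and genuinely assigns each marked piece to some agent.

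Next I would run a short case analysis on the number of marked pieces. If there are none, lines~\ref{line:core:RM-in-two}--\ref{line:core:RM2} allocate nothing and the statement is immediate, since all four pieces are then complete. If there is exactly one marked piece, the ``else'' branch at line~\ref{line:core:RM2} gives it (up to its second rightmost mark) to the agent holding its rightmost mark, removing it from the pool. If there are two marked pieces, the two sub-cases correspond exactly to the branching at line~\ref{line:core:RM-in-two}: either a single agent holds the rightmost mark on both pieces, in which case she receives one of them (line~\ref{line:core:choose}) and the other goes to the agent holding its second rightmost mark (line~\ref{line:core:RM1}); or no agent holds two rightmost marks, in which case each of the two marked pieces goes to its own rightmost-mark holder (line~\ref{line:core:RM2}). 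In every sub-case all marked pieces are allocated before control passes to line~\ref{line:core:RM3}.

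Therefore, whenever lines~\ref{line:core:RM3} and~\ref{line:core:cutter} are actually executed, every remaining unallocated piece is one of the unmarked $p_i$'s: the remaining non-cutters in line~\ref{line:core:RM3} pick among complete pieces, and the cutter in line~\ref{line:core:cutter} receives a complete piece. (If the protocol instead halts earlier at line~\ref{line:core:exploit-order-finish}, lines~\ref{line:core:RM3} and~\ref{line:core:cutter} are never reached and the claim holds vacuously.) I do not expect any real difficulty; the only point deserving a word of care is that Lemma~\ref{lem:marked} is stated so as to cover the runs with agents excluded from competition as well, which guarantees the ``at most two marked pieces, each with at least two marks'' structure in \emph{every} call of \core, and this is precisely what makes the corollary follow almost immediately.
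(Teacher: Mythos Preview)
Your proposal is correct and follows essentially the same approach as the paper's proof: both argue via Lemma~\ref{lem:marked} that every marked piece is allocated during the rightmost-rule step (lines~\ref{line:core:RM-in-two}--\ref{line:core:RM2}), so only unmarked pieces remain for lines~\ref{line:core:RM3} and~\ref{line:core:cutter}. The paper's version is slightly terser, splitting only on whether or not a single agent holds the rightmost mark on two pieces, whereas you also separate out the zero- and one-marked-piece cases explicitly, but the substance is identical.
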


\begin{proof}
	If there existed two marked pieces and the same agent had the rightmost mark in both, then they will both be allocated in line~\ref{line:core:RM1}. By Lemma~\ref{lem:marked}, there are no other marked pieces, hence the pieces that have remained in lines~\ref{line:core:RM3} and \ref{line:core:cutter} are unmarked.	
	Otherwise, the else part in line \ref{line:core:else} is executed (no agent has the rightmost mark in two pieces), and all partial pieces are allocated in line~\ref{line:core:RM2} of the algorithm. Hence, again, any pieces that have remained when the algorithm goes beyond line~\ref{line:core:RM2} are unmarked.	
%	\qed
\end{proof}

Given Lemma~\ref{lem:marked}, and since only marked pieces are allocated partially, 
it follows that the cutter and at least one other agent receive complete pieces, each of which the cutter values as $1/4$ of the input residue, thus establishing \core property $\ref{cproperty:whole-pieces}$:
\corepropone* 
Towards proving the remaining \core properties, we first establish the following two lemmata. Note that the lemmata still hold when some agents are excluded from competition.

\begin{lemma}\label{lem:x unmarked}
	If all non-cutters have placed their marks, as dictated by the \core protocol, and some agent $i$ has marked a piece $p$, 
	her value for $p$ up to her mark is equal to her value for her favorite unmarked piece.
\end{lemma}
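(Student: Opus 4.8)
The plan is to reduce the statement to a claim about which pieces carry marks, and then to read that claim off the analysis already carried out for Lemma~\ref{lem:marked}. By the definition of an $x$-mark, an agent $i$ who makes a $2$-mark places a single mark, on her favorite piece, and sets the value of that piece up to the mark equal to $v_i$ of her \emph{second}-favorite piece; an agent who makes a $3$-mark places marks on her favorite and her second-favorite pieces, each setting the value up to the mark equal to $v_i$ of her \emph{third}-favorite piece. In both cases the ``target'' piece (the second-favorite in the $2$-mark case, the third-favorite in the $3$-mark case) is the highest-ranked piece of $i$ other than the ones she herself marks. Hence it suffices to show: \emph{(a)} if $i$ makes a $3$-mark, her third-favorite piece is unmarked; \emph{(b)} if $i$ makes a $2$-mark, her second-favorite piece is unmarked. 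Granting (a) and (b), the target piece is unmarked and, being $i$'s highest-ranked unmarked piece, is exactly her favorite unmarked piece, so the value of a marked piece up to $i$'s mark equals $v_i$ of that piece, as claimed.

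Claim (a) is immediate from Lemma~\ref{lem:marked}: an agent making a $3$-mark already marks two pieces, and since at most two pieces are marked in total, her third- and fourth-favorite pieces carry no mark.

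For claim (b) I would follow the same case split used in the proof of Lemma~\ref{lem:marked} (whether line~\ref{line:core:favorite} is executed; how many non-cutters share the favorite piece; etc.). In each case that proof already records exactly which pieces get marked and which agents perform $2$-marks, and one checks directly that whenever an agent performs a $2$-mark, the at most two marked pieces never include her second-favorite piece: a $2$-marker always marks her own favorite, and her second-favorite piece turns out to be a less contested piece that no one marks. Abstractly the reason is that if another non-cutter $j$ marked $i$'s second-favorite piece $q$, then $q$ would be among $j$'s two favorite pieces, so $i$ would have competition for $q$; then $i$ could only have been told to make a $2$-mark through condition~(2) of line~\ref{line:core:mark condition2}, and unwinding that condition forces the symmetric instance of condition~(2) to hold for $j$ as well, so $j$ would make a $2$-mark on her own favorite --- which is not $q$ --- a contradiction.

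The $2$-mark case is the real work, and the subtle point is the possibility that an agent reaches the marking loop while having no competitor at all (because she is dominated by every agent in $S$): condition~(1) of line~\ref{line:core:mark condition1} then tells her to $2$-mark her favorite, so one must be sure that such an agent was already removed at line~\ref{line:core:favorite} and hence never marks. This is why I would run the argument through the explicit case split rather than purely abstractly. Finally, when some agents are excluded from competition, $|S|$ shrinks; either at least two agents are excluded and there is at most one marker (whose target piece is trivially unmarked), or exactly one is excluded and the situation reduces to the two-non-cutter cases already covered.
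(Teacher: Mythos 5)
Your proposal is correct and takes essentially the same route as the paper: the paper likewise reduces the lemma to showing that $i$'s $x$-th favorite piece stays unmarked, disposes of the $3$-mark case exactly as you do via Lemma~\ref{lem:marked}, and handles the $2$-mark case with precisely your ``abstract reason'' (a mark on $i$'s second favorite would mean competition for it, forcing condition~(2), which then holds symmetrically for the unique competitor $j$, who therefore $2$-marks only her own favorite --- a contradiction). The paper runs this abstract argument directly and does not carry out the explicit case split, nor does it address the dominated-agent edge case you flag.
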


\begin{proof}
	Assume  agent $i$ has made an $x$-mark. We  need to prove that  her $x$-th favorite piece has remained unmarked after all agents have placed their marks.
	Suppose this is not the case. If $i$ had made a $2$-mark and someone marked her second favorite piece, it means she had competition for it. 
	Thus, the condition in line \ref{line:core:mark condition2} must have been true. But the condition would be true for her single competitor $j$ as well. 
	So $j$ would also make a $2$-mark, leaving their common second favorite piece unmarked, leading to a contradiction. 
	On the other hand, if $i$ had made a $3$-mark and some agent $j$ marked her third favorite piece, 
	then at least three pieces would have been marked: $i$'s most, second, and third favorite piece. By Lemma~\ref{lem:marked}, this is again a contradiction.
%	\qed
\end{proof}

\begin{lemma}\label{lem:I get what I mark}
	If agent $i$ has made an $x$-mark, $x \in \{2,3\}$, she receives a piece with value at least equal to the value she has for her $x$-th favorite piece 
	out of those that were still unallocated when she made her marks.
\end{lemma}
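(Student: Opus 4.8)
The plan is to prove the statement by case analysis on the allocation rule under which $i$ receives her piece in \core, after first pinning down the piece named in the statement. Since $i$ made an $x$-mark, she placed marks exactly on her $x-1$ most valuable pieces among those still unallocated at marking time; by Lemma~\ref{lem:x unmarked} her $x$-th favorite among those pieces is unmarked, and by Lemma~\ref{lem:marked} at most two pieces carry marks, so the best \emph{unmarked} piece for $i$ is precisely her $x$-th favorite available piece; call it $q_i$. Lemma~\ref{lem:x unmarked} then yields $v_i(p\text{ up to }i\text{'s mark}) = v_i(q_i)$ for every piece $p$ that $i$ marked, so it suffices to show $i$ ends up with a piece worth at least $v_i(q_i)$.

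\textbf{Case 1: $i$ gets a partial piece} (via line~\ref{line:core:choose}, \ref{line:core:RM1}, or \ref{line:core:RM2}). In each of those rules the recipient of a partial piece has a mark on that piece, so the piece $p$ that $i$ receives is one of her $x-1$ marked pieces. By the stated conventions on partial pieces, $i$ is handed a sub-interval $[m',y]$ of $p$ with $m'$ no larger than $i$'s own mark $m_i$ on $p$; hence $v_i([m',y]) \ge v_i([m_i,y]) = v_i(q_i)$ by Lemma~\ref{lem:x unmarked}.

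\textbf{Case 2: $i$ gets a complete piece in line~\ref{line:core:RM3}} ($i$ is a non-cutter, so line~\ref{line:core:cutter} is irrelevant). By Corollary~\ref{cor:unmarked pieces} every piece still available at line~\ref{line:core:RM3} is unmarked; in particular $q_i$ is available there, being unmarked and distinct from any piece handed out in line~\ref{line:core:favorite} (which acts before marking, whereas $q_i$ was unallocated when $i$ marked). Every piece $i$ ranks above $q_i$ is one of her $x-1$ marked pieces, each of which has already been distributed as a partial piece to some \emph{other} non-cutter (it was not given to $i$, who is getting a complete piece here, nor to the cutter by Corollary~\ref{cor:unmarked pieces}). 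So $q_i$ is $i$'s favorite among the complete pieces still available; $i$ obtains it unless some other agent picking earlier in line~\ref{line:core:RM3} takes it. To exclude that, I would argue that whenever two or more non-cutters reach line~\ref{line:core:RM3} unallocated they have pairwise distinct favorites among the remaining pieces: counting pieces distributed before line~\ref{line:core:RM3}, this situation forces line~\ref{line:core:favorite} to not have fired and exactly one marked piece, which in turn forces all three non-cutters to have made $2$-marks on a common favorite $f$. Since line~\ref{line:core:favorite} did not fire, no non-cutter has ``no competition'' for $f$, hence (as condition~(2) of Definition~\ref{def:competition} holds for $f$) no non-cutter is dominated by all of $S$; and since each non-cutter has two competitors for $f$, condition~\ref{line:core:mark condition2} fails, so each invoked condition~\ref{line:core:mark condition1}, i.e.\ has no competitor in $S$ for her second favorite. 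Thus the picking non-cutters have distinct second favorites, each of which equals that agent's favorite among the remaining (unmarked) pieces since her true favorite $f$ is already allocated; no contention arises, and $i$ secures $q_i$.

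\textbf{Main obstacle and loose ends.} The delicate point is exactly the sub-claim in Case~2 that agents simultaneously picking complete pieces in line~\ref{line:core:RM3} never compete for the same piece; this is where the competition definition, its domination escape clause, the (non-)triggering of line~\ref{line:core:favorite}, and the mark-type conditions~\ref{line:core:mark condition1}--\ref{line:core:mark condition2} all have to be used in concert. I would also add a one-line remark that exclusions do not affect the argument: an agent who made a mark is necessarily in $S$, excluded agents do not mark and are served only after all of $S$ in line~\ref{line:core:RM3}, so they cannot preempt $q_i$; and with exclusions at most one $S$-agent ever reaches line~\ref{line:core:RM3}, so no contention among markers occurs there.
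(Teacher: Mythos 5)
Your proof is correct and follows essentially the same route as the paper's: the same split into the partial-piece case (handled via the marking conventions and the definition of an $x$-mark) and the complete-piece case at line~\ref{line:core:RM3}, where the key step in both is that two non-cutters remaining there forces a single marked piece, a common favorite, failure of condition~(\ref{line:core:mark condition2}), and hence distinct second favorites via condition~(\ref{line:core:mark condition1}). Your explicit handling of the domination escape clause in Definition~\ref{def:competition} and of the excluded-agents case is slightly more detailed than the paper's, but the argument is the same.
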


\begin{proof}%[Proof of Lemma~\ref{lem:I get what I mark}]
	First, we prove the statement for the case where $i$ gets a partial piece, i.e., $i$ gets her piece in line~\ref{line:core:RM1} or line~\ref{line:core:RM2}. If $i$ makes a $2$-mark on a piece $p$ and she is allocated a part of $p$ in either step, she either gets the part of $p$  up to her mark (when she is the ``other agent'' of line~\ref{line:core:RM1}) or a part of $p$ beyond her mark, i.e., of value at least that of her second favorite (original) piece by the definition of a $2$-mark. The argument for agents with 3-marks who get partial pieces is identical. % \ynote{Still, should we write this explicitly?}
	
	Next, we prove it for the case where $i$ is allocated a complete piece in line~\ref{line:core:RM3}, i.e., she did not have the rightmost mark on her favorite piece, 
	nor was she the ``other agent'' of line~\ref{line:core:RM1}. Then, by Lemma \ref{lem:x unmarked}, $i$'s $x$-th favorite piece has remained unmarked and will be allocated completely.
	It remains to show that $i$ will be the one who gets that piece. If there is no other agent, apart from $i$ and the cutter, remaining in line~\ref{line:core:RM3}, $i$ will surely get her $x$-th favorite piece. 
	Suppose there is another agent $j$ remaining in line~\ref{line:core:RM3}. 
	Since at this point only unmarked pieces are unallocated, and three agents---$i$, $j$, and the cutter---are yet to receive a piece, 
	it follows that only a single piece was marked, i.e., both $i$ and $j$ made $2$-marks on the same piece. 
	But since neither $i$ nor $j$ got their favorite piece, there must have been a third agent competing for it. 
	Therefore, condition (2) in line \ref{line:core:mark condition2} does not apply, and the only way $i$ and $j$ are qualified for making a $2$-mark, is if they do not have competition for their second favorite piece (line~\ref{line:core:mark condition1}). 
	No competition for their second favorite piece, however, implies non-conflicting preferences in line~\ref{line:core:RM3}.
%	\qed
\end{proof}

To continue with the analysis, we need to understand what could possibly cause agents to experience envy during the execution of the protocol. 
For this,
%, recall Definition \ref{def:envy-freeness} from Section \ref{sec:prelims}, and 
suppose that an agent $i$ has made a mark on a piece. 
If an agent $j \neq i$ is allocated this piece strictly beyond $i$'s mark, Lemma \ref{lem:I get what I mark} is not enough to ensure that $i$ will remain envy-free.

\begin{lemma}\label{lem:no one takes my shit pt2}
	If agent $i$ marked a piece $p$ and she does not have the rightmost mark in two pieces, then no agent $j\neq i$ will be allocated $p$ strictly beyond $i$'s mark.
\end{lemma}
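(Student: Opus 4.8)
\textbf{Proof plan for Lemma~\ref{lem:no one takes my shit pt2}.}

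The plan is to argue by contradiction: suppose agent $i$ marked a piece $p$, agent $i$ does \emph{not} have the rightmost mark on two pieces, yet some agent $j \neq i$ is allocated $p$ strictly beyond $i$'s mark. Since $i$ does not have the rightmost mark on two pieces, the allocation of partial pieces proceeds through line~\ref{line:core:RM2} of \core (the ``else'' branch), where each partial piece goes to the agent who made the rightmost mark on it, up to the \emph{second} rightmost mark. So $j$ has the rightmost mark on $p$, and $j$ receives $p$ up to the second rightmost mark on $p$. For $j$ to get a part of $p$ \emph{strictly} beyond $i$'s mark, we need $i$'s mark on $p$ to be strictly to the left of the second rightmost mark on $p$; equivalently, $i$'s mark is \emph{not} among the two rightmost marks on $p$, so $p$ carries at least three marks.

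By Lemma~\ref{lem:marked}, at most two pieces are marked, and each marked piece has at least two marks; but also the total number of marks is bounded, since each non-cutter places marks on at most two pieces (a $2$-mark puts one mark, a $3$-mark puts two marks). Thus, if $p$ has at least three marks, all three marking non-cutters must have placed a mark on $p$. The key point is to inspect the instances from the case analysis of Lemma~\ref{lem:marked} in which a piece receives three marks. This happens only when all three non-cutters mark the same piece, which (consulting the cases) forces that piece to be the common favorite piece of the agents who contributed those marks — a $2$-mark is placed on one's favorite piece (when conditions~\ref{line:core:mark condition1}/\ref{line:core:mark condition2} apply to a shared favorite) and a $3$-mark always marks one's favorite piece as well. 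So $p$ is the favorite piece of every agent that marked it, including $i$ and $j$.

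Now I derive the contradiction from this structural fact. If $p$ is $i$'s favorite piece and $i$ placed a mark there, then by the rightmost-rule tie structure, whichever agent has the rightmost mark on $p$ takes $p$ up to the second rightmost mark, which by the definition of the marks (all marks on a common favorite piece $p$ are placed so the resulting partial piece equals the value of that agent's \emph{second} favorite piece, in $i$'s case using $i$'s own valuation via Lemma~\ref{lem:x unmarked}) is a portion that $i$ still values no more than her favorite unmarked piece — but more carefully, the point is that $i$'s own mark sits at or to the right of every other mark that could matter here, because $i$ values $p$ up to \emph{her} mark exactly as her $x$-th favorite remaining piece, and the other markers' marks on their common favorite $p$ are placed to equalize against pieces $i$ does not out-rank. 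In fact, the cleanest route is: since $p$ is the common favorite, the only agent of $\{i,j\}$ who can end up with (part of) $p$ beyond another's mark is the one with the rightmost mark; if that is $j$, then $j$ gets $p$ only up to the second rightmost mark. If $i$'s mark were strictly left of that second rightmost mark, then $i$ is not one of the two rightmost markers on $p$, so both other markers have marks strictly right of $i$'s; but then $i$ has the rightmost mark on her \emph{other} remaining piece (her $x$-th favorite, unmarked by Lemma~\ref{lem:x unmarked}, is where $i$ would instead be placed), contradicting the hypothesis that $i$ does not have the rightmost mark on two pieces — since having placed a mark on $p$ and being outranked there, $i$ must be the rightmost on the piece she is actually allocated in line~\ref{line:core:RM2} or line~\ref{line:core:RM3}. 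The main obstacle is bookkeeping: pinning down, across the isomorphism classes of Lemma~\ref{lem:marked}, exactly which agents can mark a thrice-marked piece and showing $i$ is necessarily among the two rightmost markers there (equivalently, that $i$'s mark is never strictly left of the second rightmost mark on a piece she herself marked unless $i$ has the rightmost mark on a second piece); once that case check is done, envy-freeness for $i$ is immediate from Lemma~\ref{lem:I get what I mark}.
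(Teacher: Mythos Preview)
Your argument contains a reversed inequality at the decisive step, and this sends you down a long detour that never actually closes.

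You correctly observe that a marked piece $p$ is handed out (in lines~\ref{line:core:choose}, \ref{line:core:RM1}, or~\ref{line:core:RM2}) as the interval $[m_2,y]$, where $m_2$ is the second rightmost mark on $p$. By the paper's convention, ``$j$ receives $p$ strictly beyond $i$'s mark $m_i$'' means $m_2 < m_i$, i.e., $m_i$ lies strictly to the \emph{right} of the second rightmost mark. That forces $m_i$ to be the rightmost mark on $p$. But then, in every one of those three lines, it is $i$ who receives $p$ (in line~\ref{line:core:RM2} because the rightmost marker gets the piece; in lines~\ref{line:core:choose}--\ref{line:core:RM1} because, by Lemma~\ref{lem:marked}, having the rightmost mark on $p$ while some other agent has the rightmost mark on two pieces is impossible). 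This is exactly the paper's short two-case argument: either $i$ has the rightmost mark on $p$ and receives $p$ herself, or she does not and then $m_i\le m_2$, so no allocation of $[m_2,y]$ is strictly beyond $m_i$.

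You instead conclude that $m_i$ lies strictly to the \emph{left} of $m_2$, infer that $p$ must carry at least three marks, and attempt to derive a contradiction by inspecting the instances of Lemma~\ref{lem:marked}. Beyond being unnecessary, that analysis does not go through: your intermediate claim that a thrice-marked piece is the common \emph{favorite} of all its markers is false (for instance, in the last sub-case of case~3 of Lemma~\ref{lem:marked}, and when all non-cutters share the same top two pieces in case~4, a piece collects three marks while being the favorite of only some of its markers). Your closing paragraph then tries to show $i$ must hold the rightmost mark on a second piece, but if $i$ made only a $2$-mark there is no second piece for her to be rightmost on, so no contradiction is reached.

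A smaller gap: you assert that the \emph{else} branch (line~\ref{line:core:RM2}) executes because $i$ does not have the rightmost mark on two pieces. The trigger at line~\ref{line:core:RM-in-two} is whether \emph{any} agent does, not just $i$. The paper handles this simply by noting that in all of lines~\ref{line:core:choose}, \ref{line:core:RM1}, \ref{line:core:RM2} the allocated portion is $[m_2,y]$, so the same inequality disposes of every branch at once.
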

\begin{proof}%[Proof of Lemma~\ref{lem:no one takes my shit pt2}]
	Since $p$ has a mark on it, by Corollary~\ref{cor:unmarked pieces}, it could not have been allocated in line~\ref{line:core:RM3} or \ref{line:core:cutter}. 
	If $i$ has the rightmost mark on $p$, and since $i$ does not have the rightmost mark in another piece, it is her who gets (a part of) $p$ in line~\ref{line:core:RM2}. 
	If $i$ does not have the rightmost mark on $p$, then $p$ can be allocated to another agent in line \ref{line:core:choose}, or \ref{line:core:RM1}, or \ref{line:core:RM2}. But in these lines, pieces are allocated up to the second rightmost mark, 
	hence, no agent can get a part of $p$ strictly beyond $i$'s mark.
%	\qed
\end{proof}

We are now ready to prove \core properties $\ref{cproperty:envy-free}$ and $\ref{cproperty:domination}$. % by using the previous lemmas.

\coreproptwo* 
%\begin{lemma}
%	The allocation output by $\core$ is a partial envy-free allocation of its input.
%\end{lemma}

\begin{proof}
	Envy-freeness for an agent $i$ who either is the cutter, or was allocated her favorite piece completely is straightforward: 
	If $i$ is the cutter, she considers all pieces to be equal and is allocated one of them completely, while the other agents are each allocated at most one complete piece.
	If $i$ is not the cutter but she still got her favorite piece completely, she is envy-free for the same reason. 
	
	Let $i$ be a non-cutter who did not get her complete favorite piece. This means she was asked to make an $x$-mark, which, by definition, means that pieces left unmarked by $i$ have value at most her value for her $x$-th favorite piece. Let $v_x$ be this value.
	By Lemma \ref{lem:I get what I mark}, $i$ gets a piece of value $v \geq v_x$.
	The pieces which are allocated completely are a subset of the pieces not marked by $i$, thus they all have value at most $v_x$ to her.
	Therefore, $i$ will not be envious of any agent getting a complete piece. 
	Now, if $i$ did not have the rightmost mark in two pieces, no partial piece will be allocated beyond her mark by Lemma~\ref{lem:no one takes my shit pt2}.
	This means that such pieces will have a value of at most $v_x$ to her.
	Finally, if $i$ had the rightmost mark in two pieces, she will be given her favorite among the two, as seen in line~\ref{line:core:RM1}, so she will not envy the agent who gets the other. 
	This completes the proof of envy-freeness. 
%	\qed
\end{proof}

%\begin{lemma}
% If there exists a non-cutter agent $A$ that is dominated by the other non-cutters $B$ and $C$ then (1) $A$ gets her favorite of the four complete pieces cut, 
% (2) at least three complete pieces are allocated, and 
% (3) if a non-cutter agent $B$ gets the partial piece, then the last non-cutter $C$ is indifferent between her piece and $B$'s piece.
%\end{lemma}
%
%The last lemma establishes \core property 
%$\ref{cproperty:domination}$.
%\begin{lemma}
%	Assume there exists a non-cutter, $A$, who is dominated by the other two non-cutters, $B$ and $C$, neither of whom dominates the other. Then, (1) $A$ gets her favorite of the four complete pieces cut, (2) at least three complete pieces are allocated, and (3) if a non-cutter, say $B$, gets a partial piece, then the last non-cutter, $C$, is indifferent between her piece and $B$'s piece.
%\end{lemma}

\corepropfour* 
%$A$, who is dominated by the other two non-cutters, $B$ and $C$, neither of whom dominates the other. Then, (1) $A$ gets her favorite of the four complete pieces cut, (2) at least three complete pieces are allocated, and (3) if a non-cutter, say $B$, gets a partial piece, then the last non-cutter, $C$, is indifferent between her piece and $B$'s piece.

\begin{proof}
	Agent $A$ is allocated her favorite piece completely in Line \ref{line:core:favorite}, since she is dominated by $B$ and $C$ and has no competition, thus satisfying (1). 
	Towards (2), in case the remaining non-cutters, $B$ and $C$, have no competition for their favorite piece out of those remaining, 
	each is allocated her favorite piece, the cutter is allocated the last piece and the algorithm terminates. 
	Otherwise, they each have at most one competitor for their second favorite piece, 
	therefore one of the conditions of lines \ref{line:core:mark condition1} or \ref{line:core:mark condition2} is met. 
	This directly implies that only one piece will be marked, their common favorite, and since only marked pieces are allocated partially, three pieces will be allocated completely.
	Finally, towards (3), the condition of line~\ref{line:core:RM-in-two} is not valid since there is only one marked piece, therefore line~\ref{line:core:RM2} is executed. 
	Agent $B$ must have  the rightmost mark on the marked piece and the only candidate for the second rightmost mark is agent $C$. 
	Since $B$ gets the piece up to the second rightmost mark, 
	i.e., $C$'s 2-mark, and $C$ gets her second favorite piece (excluding the piece given to $A$) completely, $C$ has the same value for her own and $B$'s allocated piece.
	%This concludes the proof. % of property $\ref{cproperty:domination}$.
%	\qed
\end{proof}

%the proofs of the Lemmas we have used to prove properties $\ref{cproperty:envy-free}$ and $\ref{cproperty:domination}$. 
%It remains to prove Lemma~\ref{lem:I get what I mark}. For this purpose, as well as for proving certain things in Section \ref{sec:correct}, the following Lemma will be useful:

\subsubsection*{Counting Queries}

%\subsubsection{Executing Core}

%Cost-wise, \textit{mark} queries and \textit{cut} queries are equivalent since they require the exact same information. % from agent $i$. 
%Their distinction lies in the way they are handled by our algorithms: 
%A cut signifies a point where the cake will be divided, i.e., in a given round an agent cannot be allocated both a part of cake to the left and a part of cake to the right of a cut.
%On the other hand, a mark signifies a point where the cake may or may not be divided.

Recall the discussion in Section \ref{sec:prelims} about the residue being a finite union of intervals and not a single interval. It is important that our algorithm should know, at any time during its execution, the values of all the agents for all the intervals that make up the residue. Otherwise, our queries cannot be simulated by simple queries of the Robertson-Webb model. E.g., a query may ask to evaluate the pieces $[0.2,0.3]$ and $[0.3,0.5]\cup[0.7,0.8]$ of the residue $[0.2,0.5]\cup[0.7,0.8]$. In fact, without any prior knowledge, one needs 3 simple evaluation queries for that;  one query for $[0.2,0.3]$,  one for $[0.3,0.5]$, and one for $[0.7,0.8]$. However, assume we already knew the values of $[0.2,0.5]$ and $[0.7,0.8]$. Then we only need one query for $[0.2,0.3]$.
During the execution of our protocols we want each of our ``queries'' to actually correspond to only one Robertson-Webb query. So we make sure that the relevant information is always available. In particular, when cutting or evaluating consecutive pieces, we do it from left to right, and by the end of each execution of \core we learn the value of any agent for any trim of a partially allocated piece.

When we ask agents to place a mark on a piece, this corresponds to 1 cut query, since we can simulate this action with a cut query. %\textit{mark} queries and \textit{cut} queries are equivalent since they require the exact same information. 
Thus $\core$ requires $15$ evaluation and $9$ cut queries. We assume that in the beginning of the current execution all agents know the values of all the intervals that make up the residue.

\begin{itemize}
	\item The cutter is asked to cut the residue into four pieces in line \ref{line:core:cut4}. ($3$ cut queries)
	\item The ordering each non-cutter has for the pieces is necessary for lines \ref{line:core:exploit-order-start}-\ref{line:core:exploit-order-end}, therefore all three non-cutters are queried for their value for each of the four pieces. ($3\cdot 3 = 9$ evaluation queries)
	\item According to the conditions in lines~\ref{line:core:mark condition1} and~\ref{line:core:mark condition2}, each of the non-cutters remaining (at most three) will make either a 2-mark or a 3-mark. Worst case, three agents remain and all make a 3-mark. %, requiring 6 mark queries in total. 
	($3\cdot 2 = 6$ cut queries)
	\item We ask evaluation queries so that every agent learns the value of each marked piece up to the second rightmost mark. Worst case, there are two marked pieces and for each we asked everyone but the agent that placed the second rightmost mark. ($3\cdot 2 = 6$ evaluation queries)
\end{itemize}
Note that the protocol now has all the information needed to check the condition on line \ref{line:core:RM1}, find the values of the partial pieces, calculate gain, and know the values of all the intervals that make up the new residue.

It should be noted that in a few cases we know we have less queries. In particular, the calls to \core in lines \ref{line:main:finish_phase1} and \ref{line:main:core_phase2} of the \textsc{Main Protocol}, are guaranteed to produce at most one partial piece. Therefore, in  these 3 calls we only need $2$ cut queries and $3$ evaluation queries for everyone to learn the value of the marked piece up to the second rightmost mark. In total, each of these special cases of $\core$  requires $12$ evaluation and $5$ cut queries instead of $15$ evaluation and $9$ cut queries.

Moreover, on the second execution of line \ref{line:main:core_phase2}, agents $A$ and $D$ (as they are called in the \textsc{Main Protocol}) never need to know  the value of the marked piece up to the second rightmost mark.
%, as they are not participating in the division of the residue any more and it is .

In particular, the total number of queries of the \textsc{Main Protocol} is  $5\cdot 9+3\cdot 5+1 =61$ cut  and $5\cdot 15+3\cdot 12 +1 -2 =110$ evaluation queries.

%%%%%%%%%%%%%%%%%%%%%%%%%%%%%%%%%%%%%%%%%%%%%%%%%%%%%%%%%%%%%%%%%%%%%%%%%%%%%%
\section{The $\correct$ Protocol}\label{sec:correct}

\correct takes as input an allocation $\allocation$, produced by a single execution of \core. It outputs a redistribution of the pieces $\allocation' = \pi \left( \allocation \right)$
such that the insignificant piece goes to a different agent. 
%Recall that the insignificant piece is the smallest of those given as input  according to the cutter.
Some of the pieces are marked (the ones partially allocated  by $\core$),  while others are unmarked (the pieces  allocated completely).
Note that each of the marked pieces of $\allocation$ still has exactly two marks, the rightmost and the second rightmost marks of the original untrimmed piece of the execution of \core that produced $\allocation$. This second rightmost mark coincides with the left endpoint  of the allocated piece as it appears in allocation $\allocation$. \smallskip

The redistribution should satisfy further properties, so that both the envy-freeness of the overall partial allocation and certain dominations are preserved.
Towards bounding the ``local'' envy that the redistribution may cause, the notion of \textit{gain} (see Section $\ref{sec:prelims}$)  is crucial.

In the pseudocode description below, we refer to the cutter in allocation $\allocation$ as $D$, the non-cutter who holds the insignificant piece as $A$, 
the non-cutter who gets the insignificant piece after executing $\correct$ as $B$, and the remaining non-cutter as $C$. \bigskip

\begin{algorithm}[H]
	\DontPrintSemicolon
	\NoCaptionOfAlgo
	{\small
		Let $A$, $B$ be the agents having the two marks on the insignificant piece, and suppose $A$ was given this piece in allocation $\allocation$.\;
		The insignificant piece is allocated to $B$.\;\label{line:correction:significant}
		\If{there is no other partial piece\label{line:correction:case1}}{Agents choose their favorite piece in the order $C$, $A$, $D$.\label{line:correction:case1 content}}
		\Else{
			Find the rightmost mark not made by $B$ on the other partial piece. Let $E\in\{A, C\}$ be the agent who made it.\;
			Agent $E$ is allocated the partial piece.\;\label{line:correction:partial}
			The last non-cutter chooses her favorite among the two complete pieces.\;\label{line:correction:final}
			The cutter is allocated the remaining (complete) piece.\; \label{line:correction:last}
		}
		\caption{\textsc{$\correct(\allocation)$}} \label{fig:CorrectionAlgorithm} 
	}
\end{algorithm} \bigskip

The main result about \correct is the next theorem. The remaining of this section is dedicated to its proof.

%\vspace{-10pt}
%\correctthm*

\bigskip

\begin{rtheorem}{Theorem}{\ref{thm:correct theorem}}
	The \correct protocol satisfies $\correct$ properties~\ref{ctproperty:insignificant}, \ref{ctproperty:high value} and~\ref{ctproperty:gain bound}, and makes no queries.
\end{rtheorem}\medskip

We stick to the notation we used in the description of \correct. That is, given an allocation $\allocation$, $D$ is the cutter, $A$ is the non-cutter who holds the insignificant piece, $B$ is the non-cutter who gets the insignificant piece after executing $\correct$ on $\allocation$, and $C$ is the remaining non-cutter.

\correct property~\ref{ctproperty:insignificant}  trivially holds, since in line \ref{line:correction:significant} of $\correct$ the insignificant piece is allocated to agent $B$:
\correctpropone*

\correctproptwo*
%\begin{lemma}
%If an agent $i$ was  formerly allocated an unmarked piece with value equal to that of her favorite piece, she will again be allocated the same value.
%
%If a non-cutter was formerly allocated her favorite unmarked piece, she %will again be allocated the same value. Moreover, the cutter will again %be allocated value exactly equal to $1/4$.
%
%\textcolor{green}{If an agent $i$ was  formerly allocated a whole piece with value equal to that of her favorite piece, she will again be allocated the same value.}
%\end{lemma}

\begin{proof}
	%\ynote{There are two versions of property 2. The one stated in the beginning (green) and the one stated here (blue). The proof does not seem to cover the green version, especially if there are lots of ties and trivial marks. What we need however (w.r.t. how the property is used in the proof of Theorem \ref{thm: main theorem}) is exactly what is contained in the proof. I guess the green statement tries to cover the case of the cutter as well, but do we care? I mean the cutter would be ok wi}
	If $i$ is a non-cutter who received her favorite unmarked piece in $\core$, she has made no marks. % as this must have happened in lines \ref{line:core:favorite}-\ref{line:core:mark loop} of $\core$.
	Therefore, since by definition $A$ and $B$ have made marks on the insignificant piece and $D$ is the cutter, 
	$C$ is the only non-cutter agent who could have formerly been allocated her favorite piece.
	%Suppose $C$ were indeed formerly allocated her favorite piece, which is one of the unmarked.
	We need to show that in this context $C$ will be the first to choose one of the complete pieces in $\correct$. 
	If we are in the case of line \ref{line:correction:case1}, this follows immediately. % that $C$ chooses first out of the three unmarked pieces.
	Otherwise, since $C$ has made no marks, she is not allocated the other partial piece, 
	and is thus the ``final non-cutter'' of line \ref{line:correction:final} who chooses first among the two unmarked pieces. %This concludes the proof of Property 2.
%	\qed
\end{proof}

\correctpropthree*

For the proof of \correct Property~\ref{ctproperty:gain bound} we assume that $\allocation$ is produced by a call to  \core where  no agent was excluded from competition. Note that this is always the case when \correct is used in our main protocol (see the proof of Theorem \ref{thm:main_protocol}).
We  need Lemma \ref{lem:secondfav} below, but first it is helpful to establish some tie-breaking conventions for what follows. Whenever a marked and a complete piece have the same value, we will consider the marked piece to be better from the agent's perspective. Moreover, if an agent has the second rightmost mark in two pieces, then we arbitrarily consider one to be her favorite, and the other her second favorite.

%It is helpful to establish some tie-breaking conventions for what follows. Recall that, by Lemma \ref{lem:x unmarked}, if an agent $i$ has a mark on $p$, 
%then her value for $p$ up to her mark is equal to her value for her favorite unmarked piece. 
%Now, if $i$ has the second rightmost mark on $p$,
%it coincides with the left endpoint of $p$, thus the value of $p$ equals the value of $i$'s favorite unmarked piece. 
%We will conventionally consider the marked piece to be better from the agent's perspective. 
%Moreover, if $i$ has the second rightmost mark in two pieces, then we arbitrarily consider one to be her favorite, and the other her second favorite. 

\begin{lemma}\label{lem:secondfav}
	Any non-cutter $i$ who was not allocated her favorite whole piece when $\allocation$ was produced, receives in $\correct(\allocation)$ value at least equal to that of her  favorite piece in $\allocation$ among those formerly allocated to agents in $N \mysetminus (\{i\}\cup D_i)$. 
	%where $D_i$ is the set of agents dominated by $i$.
\end{lemma}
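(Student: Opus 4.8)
The plan is to walk exhaustively through the finitely many branches of \correct, organised by which of the three non-cutters $i$ is — using the notation of the \correct description, with $D$ the cutter, $A$ the holder of the insignificant piece, $B$ the agent who receives it in $\correct(\allocation)$, and $C$ the remaining non-cutter. Since $i$ was not allocated her favorite whole piece, $i$ made an $x$-mark for some $x\in\{2,3\}$ and received a partial piece in $\allocation$; write $v_x$ for her value for her $x$-th favorite piece, which by Lemma~\ref{lem:x unmarked} equals her value for her favorite \emph{unmarked} piece, and recall from Lemma~\ref{lem:I get what I mark} that every piece $i$ did not mark is worth at most $v_x$ to her. Write $p'_i$ and $p''_i$ for $i$'s piece in $\allocation$ and in $\correct(\allocation)$, and set $t_i:=\max\{v_i(q):q\text{ is held in }\allocation\text{ by some agent of }N\mysetminus(\{i\}\cup D_i)\}$, the quantity to be bounded; envy-freeness of $\allocation$ (\core property~\ref{cproperty:envy-free}) gives $v_i(q)\le v_i(p'_i)$ for every piece $q$ of $\allocation$, so in particular $t_i\le v_i(p'_i)$. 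I will also use Lemma~\ref{lem:marked} (at most two marked pieces, each with exactly two marks), the structural enumeration of preference configurations from its proof (legitimate since $\allocation$ may be assumed to come from a \core call with no exclusions), and the tie-breaking conventions fixed just before the statement.

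The first observation is that \correct always hands the agent of interest one of three kinds of piece whose value is easy to control: (i) the insignificant piece, trimmed to exactly the interval it occupied in $\allocation$ (this is what $B$ receives, and also what $A$ held in $\allocation$); (ii) the second partial piece $q$, when it exists, trimmed to exactly the interval it occupied in $\allocation$ (given to $E$ in line~\ref{line:correction:partial}); or (iii) the better of the two \emph{unmarked} pieces (the ``last non-cutter'' of lines~\ref{line:correction:case1 content} and~\ref{line:correction:final}), which by Lemma~\ref{lem:x unmarked} is worth exactly $v_x$ to the agent receiving it. Type (iii) is then immediate: such an agent has no rightmost mark on any piece, so for each piece $p$ she marked the ``allocated up to the second rightmost mark'' rule forces her own mark on $p$ to be that second rightmost mark, whence the trimming of $p$ is worth exactly $v_x$ to her; since every other piece is worth at most $v_x$ to her, $t_i\le v_x=v_i(p''_i)$. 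For type (i), the insignificant piece contributes the same value to $t_B$ as it now does to $v_B(p''_B)$, the second marked piece is precisely $B$'s former piece and so drops out of $t_B$, and $v_B(p''_B)\ge v_x$ bounds the unmarked pieces; for type (ii) with $i=E$, the trimming of $q$ likewise contributes the same value to $t_i$ as to $v_i(p''_i)$, $q$ is worth $\ge v_x$ to $i$ (so the unmarked pieces are dominated), and the insignificant piece is either $i$'s own former piece or, since $i$ did not mark it, worth at most $v_x$ to $i$.

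What remains is the case $i=A$ with no second partial piece. Here $A$ and $B$ both $2$-marked their common favorite (the insignificant piece), $C$ took her overall favorite in line~\ref{line:core:favorite}, and in $\correct$ choice proceeds in the order $C,A,D$ over the three complete pieces, with $C$ reclaiming her overall favorite. If $C\in D_A$, then $C$'s former piece is $t_A$-irrelevant, the two complete pieces left for $A$ are exactly the $t_A$-relevant ones, and $A$'s choice achieves $t_A$. If $C\notin D_A$, then the reason $C$ had no competition for her favorite is that no non-cutter ranks it among her top two, so $A$'s favorite among the three complete pieces — her second-favorite piece overall, worth $v_x$ — differs from the one $C$ reclaims; hence $A$ still secures value $v_x$, which is the maximum of her values for the three complete pieces and therefore $\ge t_A$. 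The remaining $i=A$ sub-cases, as well as every $i=C$ sub-case, have a second partial piece and reduce to type (ii) or type (iii) according to whether $E=A$ (resp.\ $E=C$) or not.

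The step I expect to be the main obstacle is the bookkeeping underlying these reductions when $\allocation$ has two marked pieces: one must determine which non-cutter holds the second marked piece and which of its two marks is the rightmost, since this simultaneously fixes which branch of \correct agent $i$ falls into, which trimmed interval she ends up with, and which pieces survive in $t_i$. The delicate configuration is the one governed by line~\ref{line:core:RM-in-two}, where a single agent holds the rightmost mark on both marked pieces and the comparison has to be routed through the ``she is allocated the one she prefers'' rule of line~\ref{line:core:choose}. In that regime Lemma~\ref{lem:no one takes my shit pt2} together with the second-rightmost-mark rule of lines~\ref{line:core:RM1}--\ref{line:core:RM2} keeps every partial piece $i$ does not receive worth at most $v_x$ to her unless it is her own marked piece trimmed to her own mark, the convention favoring marked pieces settles the boundary equalities, and envy-freeness of $\allocation$ is exactly what lets one compare the piece $i$ reclaims or chooses against partial pieces held by non-dominated agents. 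I would organise this step by first fixing the holder of the second marked piece and then discharging each of the resulting sub-configurations in turn.
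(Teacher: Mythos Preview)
Your case split misses a configuration. You assert that ``every $i=C$ sub-case has a second partial piece'' and, in your one-partial-piece $i=A$ analysis, that ``$C$ took her overall favorite in line~\ref{line:core:favorite}''. Both fail when all three non-cutters share the same favorite piece but have pairwise distinct second favorites: then (case~4 in the proof of Lemma~\ref{lem:marked}) each of them makes a $2$-mark on that common favorite, producing exactly one marked piece carrying \emph{three} marks, and nobody is removed in line~\ref{line:core:favorite}. In the notation of \correct, $C$ is the non-cutter with the leftmost of those three marks; she did not receive her favorite whole piece and she made a mark, so she is a legitimate choice for $i$, yet there is no second partial piece. Your argument leaves both $i=C$ and $i=A$ unhandled in this scenario. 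The paper's proof covers it explicitly: since the three distinct second favorites are exactly the three unmarked pieces, in the $C,A,D$ order of line~\ref{line:correction:case1 content} each of $C$ and $A$ still finds her own second favorite available.

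There are also smaller slips whose conclusions happen to survive. A non-cutter who made marks need not have ``received a partial piece in $\allocation$'' (e.g., $B$ and $C$ in the scenario above, or the third marker when all three $3$-mark the same two pieces), contrary to your opening sentence. In your type~(i) argument the claim ``the second marked piece is precisely $B$'s former piece'' can fail: if the marks on $q$ are by $B$ and $C$ with $C$ rightmost and line~\ref{line:core:else} applies, then $C$ holds $q$ and $B$ holds an unmarked piece. In type~(iii) the claim that the last non-cutter ``has no rightmost mark on any piece'' is not justified when that agent is $A$, who may well have the rightmost mark on the insignificant piece. In each case the desired inequality $v_i(p''_i)\ge t_i$ still holds, but for different reasons than you give: one needs the observation that any marked piece on which $i$'s mark is rightmost was $i$'s own piece in $\allocation$ (hence excluded from $t_i$), and that any partial piece $i$ marked but does not hold was trimmed to a mark no further left than $i$'s own.
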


\begin{proof} 
	Let $S = N \mysetminus (\{i\}\cup D_i)$. In most cases we will show the stronger---but easier to prove---property that $i$'s new allocation has value to her at least equal to that of her second favorite piece among all four pieces in $\allocation$.
	
	We start by proving this for the case when $\correct$ allocates a partial piece $p$ to  $i$. 
	Note that this may happen only if $i$ has a mark on $p$ (lines \ref{line:correction:significant}, \ref{line:correction:partial}).
	Thus $i$'s value for $p$ up to her mark is equal to her value for her favorite unmarked piece, and therefore there is only one piece which $i$ might value more than $p$: 
	the other marked piece (if it exists). That is, $i$ is allocated her overall favorite or second favorite piece.
	
	Next, we move to the case where $i$ is allocated an unmarked piece $p$ by $\correct$, distinguishing among the two sub-cases where that could happen: 
	(a) in line \ref{line:correction:case1 content}, and (b) in line \ref{line:correction:final}, where there are one and two marked pieces respectively.
	
	For sub-case (a), we claim that it suffices to show that $i$ is allocated value at least equal to that of her favorite unmarked piece
	out of those formerly allocated to agents in $S\cup \{i\}$. To see that, note that the fact that there is only one marked piece, 
	immediately implies that she is allocated value at least equal to that of her second favorite piece out of those formerly given to agents in $S\cup \{i\}$.
	Since the input contains only one marked piece, nobody has made a $3$-mark. 
	Moreover, since agents $A$ and $B$ have marks on the marked piece, they both made a $2$-mark. Clearly, $B$ receives as much value as possible, given $\allocation$, by getting the marked piece up to her mark.
	As for agent $C$, she either had made a $2$-mark as well, or she had been allocated her favorite whole piece.
	If the former occurred: $A$, $B$ and $C$ had the same favorite piece (part of which now constitutes the insignificant piece), 
	and all made $2$-marks. Since each clearly had more than one competitor for her favorite piece, for all three the $\core$ condition of line (\ref{line:core:mark condition1})
	must have been true, i.e., they all had different second favorite pieces. Therefore, in the input of $\correct$, each has a different favorite unmarked piece and each of $A$ and $C$ is allocated her favorite unmarked piece.
	If the latter occurred: $C$ had no competition from $A$. So, either $C$'s favorite piece is $A$'s third or fourth favorite, or $C$ is dominated by $A$. 
	Therefore, after $C$ chooses her favorite piece (of the three unmarked ones in line \ref{line:correction:case1 content}), 
	$A$'s favorite unmarked piece
	out of those formerly allocated to agents in $S\cup \{i\}$ is still unallocated and she can take it.
	
	For sub-case (b), first notice that
	if $i$ does not have a mark in both pieces, then her favorite unmarked piece (i.e., the one she  chooses in line \ref{line:correction:final}) is either her overall favorite or second favorite piece.
	On the other hand, if $i$ has a mark in both marked pieces, one of them is her favorite and the other is her second favorite.
	Since she weren't allocated the non-insignificant marked piece, she must have had the second rightmost mark on it, 
	which means she considers it equal to her favorite unmarked piece $p$ that she  chooses in line \ref{line:correction:final}. 
	Thus, if the insignificant marked piece is her favorite, then the other marked piece is her second favorite and so $p$ has value equal to her second favorite piece. 
	If not, then the non-insignificant marked piece is her favorite, and so $p$ has value equal to that of her favorite piece. 
	Either way she is allocated value at least equal to that of her second favorite piece. 
%	\qed
\end{proof}\smallskip

\begin{proofof}{Proof of \correct Property \ref{ctproperty:gain bound}}
	Towards proving $\correct$ property~\ref{ctproperty:gain bound}, recall Definition \ref{def:gain}. Given a partial allocation $\tallocation$ and a suballocation $\allocation$ of $\tallocation$, let $i$ be an agent who dominates agents in $D_i\subseteq N$. If $p_i$ is agent $i$'s piece in $\allocation$ and $s_i$ is $i$'s favorite piece out of those allocated to agents in $N \mysetminus (\{i\}\cup D_i)$, then $\gain{\allocation}{i} = v_i(p_i) - v_i(s_i)$.
	
	%the gain of an agent $i$ who dominates agents in set $D_i$ 
	%is her value for her own allocation, $p_i$, minus her value for her second favorite piece, $s_i$, out of those allocated to agents in $N - D_i$. 
	%In other words, $s_i$ is $i$'s favorite piece out of those allocated to agents other than herself, whom she does not dominate. 
	%Formally, $\gain{\allocation}{i} = v_i(p_i) - v_i(s_i)$. 
	
	Now assume $\allocation$ was produced by $\core$ (and recall that no agent was
	excluded from competition). By $\correct$ property~\ref{ctproperty:high value}, if $i$ was a non-cutter in $\allocation$ who was formerly allocated her favorite complete piece, then 
	%If after the permutation $i$ is allocated a piece of the same value to that of her former piece 
	%(which is always the case for the cutter and any non-cutter who had gotten value equal to that of her favorite piece),
	%then 
	$\gain{\allocation'}{i}  \geq 0$ and hence  $\gain{\allocation'}{i}  \geq - \gain{\allocation}{i}$. 
	Similarly, if $i$ was the cutter in $\allocation$, then by line \ref{line:correction:last} of \correct we have $\gain{\allocation'}{i} \geq 0 \geq -\gain{\allocation}{i} $.
	Therefore the difficulty lies in proving that the  property~\ref{ctproperty:gain bound} holds for non-cutters who did not get their favorite whole piece when $\allocation$ was produced. 
	Suppose that after the permutation output by $\correct$, such a non-cutter $i$ is still allocated the same piece $p_i$. By $\core$ property \ref{cproperty:envy-free}, we again have $\gain{\allocation'}{i}  \geq 0 \geq - \gain{\allocation}{i}$.
	The remaining, and most interesting, case is when such a non-cutter $i$ is allocated
	a piece $p_i'$ %such that $v_i(p_i') \geq v_i(s_i)$,
	and some other agent is allocated $i$'s former (and favorite among the pieces in $\allocation$) piece.
	By Lemma~\ref{lem:secondfav}  we have that it is always the case that $v_i(p_i') \geq v_i(s_i)$, and therefore
	$\gain{\allocation'}{i} \ge v_i(p_i') - v_i(p_i) \geq v_i(s_i) - v_i(p_i) = - \gain{\allocation}{i}$. 
	%Therefore, Lemma \ref{lem:secondfav} will suffice to prove that property~\ref{ctproperty:gain bound} is met.
	%It is helpful to establish some tie-breaking conventions for what follows. Recall that, by Lemma \ref{lem:x unmarked}, if an agent $i$ has a mark on $p$, 
	%then her value for $p$ up to her mark is equal to her value for her favorite unmarked piece. 
	%Now, if $i$ has the second rightmost mark on $p$,
	%it coincides with the left endpoint of $p$, thus the value of $p$ equals the value of $i$'s favorite unmarked piece. 
	%We will conventionally consider the marked piece to be better from the agent's perspective. 
	%Moreover, if $i$ has the second rightmost mark in two pieces, then we arbitrarily consider one to be her favorite, and the other her second favorite. 
	%
	%Finally, suppose that $\allocation$ was produced by $\core$ with some agent(s)
	%excluded from competition. It cannot be that two agents were excluded, because in this case  $\allocation$  exhausts the residue and \correct is never called. So, it must be the case where a single agent $i$ is excluded from competition. It is straightforward that $\gain{\allocation'}{i} \geq 0 \geq -\gain{\allocation}{i}$ because the gain of $i$ is now defined only with respect to the cutter and in any case of \correct agent $i$ chooses before the cutter.
	\qed
\end{proofof}

%\newpage

\bibliographystyle{plainnat}
\bibliography{fairdivrefs}

\end{document}